\newtheorem{theorem}{Theorem}[section]
\newtheorem{lemma}[theorem]{Lemma}
\newtheorem{corollary}[theorem]{Corollary}
\newtheorem{proposition}[theorem]{Proposition}
\newenvironment{proof}{\noindent {\bf Proof.}\ }{\qed\par\vskip 4mm\par}
\newcommand{\qed}{\hfill $\square$}
\newcommandx*{\LDAUOmicron}[2][1=@pkling_false]{\mathcal{O}\ifthenelse{\equal{#1}{small}}{\bigl(#2\bigr)}{\left(#2\right)}}
\newcommandx*{\LDAUomicron}[2][1=@pkling_false]{\mathrm{o}\ifthenelse{\equal{#1}{small}}{\bigl(#2\bigr)}{\left(#2\right)}}
\newcommandx*{\LDAUOmega}[2][1=@pkling_false]{\Omega\ifthenelse{\equal{#1}{small}}{\bigl(#2\bigr)}{\left(#2\right)}}
\newcommandx*{\LDAUomega}[2][1=@pkling_false]{\omega\ifthenelse{\equal{#1}{small}}{\bigl(#2\bigr)}{\left(#2\right)}}
\newcommandx*{\LDAUTheta}[2][1=@pkling_false]{\Theta\ifthenelse{\equal{#1}{small}}{\bigl(#2\bigr)}{\left(#2\right)}}
\newcommandx*{\set}[2][2=@pkling_false]{\left\{#1\ifthenelse{\equal{#2}{@pkling_false}}{}{\;\middle|\;#2}\right\}}
\DeclareMathOperator{\PoA}{PoA}
\DeclareMathOperator{\diam}{diam}
\DeclareMathOperator{\girth}{girth}
\DeclareMathOperator{\OPT}{OPT}
\title{%
    Multilevel Network Games%
    \thanks{This work was partially supported by the German Research Foundation (DFG) within the Collaborative Research Center ``On-The-Fly Computing'' (SFB 901), by the EU within FET project MULTIPLEX under contract no.\ 317532, and the International Graduate School ``Dynamic Intelligent Systems''.}%
    ~~\thanks{An abstract of this paper has been accepted for publication in the proceedings of the 7th International Symposium on Algorithmic Game Theory (SAGT), available at www.springerlink.com \cite{2014sagtmultilevel}. An extended abstract of this paper has been accepted for publication in the proceedings of the 10th International Conference on Web and Internet Economics (WINE), available at www.springerlink.com \cite{ACJS14}.}
}
\author{
    Sebastian Abshoff
    \and
    Andreas Cord-Landwehr
    \and
    Daniel Jung
    \and
    Alexander Skopalik
    \newline
}
\date{
    Heinz Nixdorf Institute \& Computer Science Department\\[0.2em]
    University of Paderborn (Germany)\\[0.2em]
    Fürstenallee 11, 33102 Paderborn\\[0.2em]
}
\begin{document}

\maketitle

\thispagestyle{empty}

\begin{abstract}
We consider a multilevel network game, where nodes can improve their communication costs by connecting to a high-speed network.
The $n$ nodes are connected by a static network and each node can decide individually to become a gateway to the high-speed network.
The goal of a node $v$ is to minimize its private costs, i.e., the sum (SUM-game) or maximum (MAX-game) of communication distances from $v$ to all other nodes plus a fixed price $\alpha > 0$ if it decides to be a gateway.
Between gateways the communication distance is $0$, and gateways also improve other nodes' distances by behaving as shortcuts.
For the SUM-game, we show that for $\alpha \leq n-1$, the price of anarchy is $\LDAUTheta{n/\sqrt{\alpha}}$ and in this range equilibria always exist.
In range $\alpha \in (n-1,n(n-1))$ the price of anarchy is $\LDAUTheta{\sqrt{\alpha}}$, and for $\alpha \geq n(n-1)$ it is constant.
For the MAX-game, we show that the price of anarchy is either $\LDAUTheta{1 + n/\sqrt{\alpha}}$, for $\alpha\geq 1$, or else $1$.
Given a graph with girth of at least $4\alpha$, equilibria always exist.
Concerning the dynamics, both the SUM-game and the MAX-game are not potential games.
For the SUM-game, we even show that it is not weakly acyclic.
\end{abstract}

\section{Introduction}
Today's networks, like the Internet, do not consist of one but a mixture of several interconnected networks.
Every network has individual qualities and hence the total performance of the network becomes a mixture of these individual properties.
Typically, one can categorize those different networks into high-speed backbone networks and low-speed general purpose networks.
Given the fact that nodes in Internet-like networks establish their connections in an uncoordinated and selfish way, it becomes a challenging question to understand the evolution and outcome of those networks.

We model and analyze the interaction of two networks: a low speed general purpose network and a high-speed backbone network.
Every node can decide individually if it wants to connect to the high-speed network for a fixed price $\alpha$ in order to minimize its private costs, i.e., the costs of connecting to the high-speed network plus the costs for communicating with other nodes.
The communication costs of a node are given by the sum or maximum distance to all other nodes in the network, possibly improved by shortcuts through the high-speed network.
Having two nodes that are both connected to the high-speed network, they provide a shortcut of a fixed (very small) edge length.
In our model, we assume the shortcut edge length to be less than $1$ divided by the number of nodes and normalize it to be $0$.

\paragraph{Model and Notations.}
We consider a set $V$ of $n$ nodes forming an undirected connected graph $G \coloneqq (V,E)$.
Each node of this graph can connect to a high-speed network by paying a fixed price $\alpha > 0$.
A node connected to the high-speed network is called a \emph{gateway} and we assume communication distances between each pair of gateway nodes to be $0$.
The shortest path distance between two nodes $u,v$ in $G$ is given by $d(u,v)$, whereas we consider $d(u,v)$ to be the hop distance.
Having a set of gateways $S$, we define the communication distance $\delta(u,v) \coloneqq \min\{d(u,v), d(u,S) + d(S,v)\}$.
Each node $v\in V$ aims to minimize its private costs by selfishly deciding whether to connect to the high-speed network.
We identify the set of gateways $S$ with the current strategy profile, i.e., nodes in $S$ are gateways and nodes in $V\setminus S$ are non-gateways.

The private costs of a node in the \emph{SUM-game} are $c_v(S) \coloneqq |S\cap\{v\}|\alpha + \sum_{u\in V} \delta(v,u)$.
For the \emph{MAX-game}, the private cost function is $c_v(S) \coloneqq |S\cap\{v\}|\alpha + \max_{u\in V} \delta(v,u)$.
For both games, the social costs are $c(S) \coloneqq \sum_{v\in V} c_v(S)$.

If a node improves its private costs by changing its strategy from non-gateway to gateway or vice versa, we call this an \emph{improving response} (IR).
For an IR where a node $v$ changes its strategy to be a gateway, we say that $v$ \emph{opens}.
Analogously, we say $v$ \emph{closes} if it changes its strategy from gateway to non-gateway.
We call a strategy profile $S$ a (pure) Nash equilibrium (NE) if no node can perform an IR.
We require that there is always at least one gateway in the graph, i.e., the last gateway is not allowed to close even if that strategy change is an IR.

\paragraph{Our Questions.}
A main objective of the research on network games is the analysis of the \emph{price of stability} and the \emph{price of anarchy}.
Both measure the quality of Nash equilibria by comparing their social costs to the smallest social costs possible for a given graph and $\alpha$ value.
The price of stability, see for example \cite{anshelevich2003,anshelevich2004}, is defined as the ratio of the smallest social costs of any Nash equilibrium and the optimal social costs.
The price of anarchy (PoA), introduced in \cite{koustoupias1999}, is defined as the ratio of the biggest social costs of any Nash equilibrium and the optimal social costs.

We analyze the convergence processes of the games by questioning whether they provide the \emph{finite improvement property} or (lesser) whether they are \emph{weakly acyclic}.
A game with the finite improvement property (FIPG) guarantees that, when starting from any initial state, \emph{every} sequence of IRs eventually converges to a NE state, i.e., every sequence of IRs is finite.
\textcite{monderer1996} showed that a game is a FIPG if and only if there exists a generalized ordinal potential function $\Phi: V \rightarrow \mathbb{R}$ that maps strategy profiles to real numbers such that if a node performs an IR the potential value decreases.
A game is called \emph{weakly acyclic} (WAG) \cite{young1993evolution} if, starting from any initial strategy profile, there exists \emph{some} finite sequence of IRs that eventually converges to a NE state.

\paragraph{Related Work.}
Network creation games (NCG) are an established model to study the evolution and quality of networks established by selfishly acting nodes.
In these games, nodes can decide individually which edges they want to buy (each for a fixed price $\alpha > 0$) in order to minimize their private costs.
For a node, the private costs are either the sum (SUM-game, \textcite{fabrikant2003}) or maximum (MAX-game, \textcite{demaine2007}) of the distances to all other nodes in the network plus the costs of the bought edges.

The task of describing the maximal possible loss by selfish behavior was formalized as the \emph{price of anarchy} and first discussed by \textcite{fabrikant2003} for the SUM-game.
The authors proved an upper bound of $\LDAUOmicron{\sqrt{\alpha}}$ on the price of anarchy (PoA) in the case of $\alpha<n^2$, and a constant PoA otherwise.
Later, \textcite{albers2006} proved a constant PoA for $\alpha=\LDAUOmicron{\sqrt{n}}$ and the first sublinear worst case bound of $\LDAUOmicron{n^{1/3}}$ for general $\alpha$.
\textcite{demaine2007} were the first to prove an $\LDAUOmicron{n^{\varepsilon}}$ bound for $\alpha$ in the range of $\LDAUOmega{n}$ and $\LDAUomicron{n\lg n}$.
Recently, by \textcite{mihalak2010} and improved by \cite{mihalak2013treeEquilibria}, it was shown that for $\alpha \geq 65 n$ all equilibria are trees (and thus the PoA is constant).
For non-integral constant values of $\alpha > 2$, \textcite{hamilton2013anarchyIsFree} showed that the PoA tends to $1$ as $n\rightarrow\infty$.

For the MAX-game, \textcite{demaine2007} showed that the PoA is at most $2$ for $\alpha\geq n$, for $\alpha$ in range for $2\sqrt{\lg n}\leq \alpha\leq n$ it is $\LDAUOmicron{\min\{4^{\sqrt{\lg n}},(n/\alpha)^{1/3}\}}$, and $\LDAUOmicron{n^{2/\alpha}}$ for $\alpha < 2\sqrt{\lg n}$.
For $\alpha>129$, \textcite{mihalak2010} showed, like in the average distance version, that all equilibria are trees and the PoA is constant.

In several subsequent papers, different approaches were done to simplify the games and also to enable nodes to compute their best responses in polynomial time (which is not possible in the original games).
\textcite{alon2010} introduced the \emph{basic network creation game}, where the operation of a node only consists of swapping some of its incident edges, i.e., redirecting them to other nodes.
There is no dependence on a cost parameter $\alpha$ and best-responses can be computed in polynomial time.
Restricting the initial network to trees, the only equilibrium in the SUM-game is a star graph.
Without restrictions, all (swap) equilibria are proven to have a diameter of $2^{\LDAUOmicron{\sqrt{\log n}}}$, which is also the PoA.
For the MAX-version, the authors provide an equilibrium network with diameter $\LDAUTheta{\sqrt{n}}$.

In \cite{lenzner2012greedy}, Lenzner introduced a different approach by taking the original NCGs from \cite{fabrikant2003}, but restricting the operation of a node to single edge changes.
This model allows for polynomial time computable best responses, but at the same time its equilibria give a $3$-approximation to the equilibria of the original game.

For a variety of these games, \textcite{lenzner2013} studied convergence properties, namely, whether those games always converge for any sequence of IRs (FIPG) or whether at least there exists a finite sequence of IRs from any initial state such that the game eventually converges to a NE (WAG).
For all researched games, with the only exception of basic network creation games on initial tree networks, they provided negative convergence results.

\paragraph{Our Results.}
We introduce a new network game that focuses on dynamics in multilevel networks.
In particular, we study how nodes of a general purpose network interact with a high-speed network when deciding selfishly whether to connect or not connect to the high-speed network.
For both the SUM- and the MAX-game it is NP-hard to find an optimal placement of gateways (Theorem~\ref{thm:NPhardness} and \ref{thm:NPhardnessMax}).

For the SUM-game, we show that for $\alpha \leq n-1$ and $\alpha > n(n-1)$ equilibria always exist and that the PoA is $\LDAUTheta{1 + n/\sqrt{\alpha}}$ (Theorem~\ref{thm:sumPoAresults}).
For the range $\alpha\in (n-1,n(n-1))$ we upper bound the PoA by $\LDAUOmicron{\sqrt{\alpha}}$.
Concerning the dynamics, the SUM-game is no FIPG for wide ranges of parameter values $\alpha$ (Theorem~\ref{thm:sumNotFIPGgeneral}).
And we further show that it is not even a WAG (\ref{thm:SumNotWAG}).
Yet, we can provide convergence properties for special cases.

For the MAX-game, we show that equilibria always exist if the girth is at least $4\alpha$ (which is always true for trees).
Like in the SUM-game, the PoA is $\LDAUTheta{1 + n/\sqrt{\alpha}}$ for $\alpha \geq 1$ (Theorem~\ref{thm:maxPoaResults}), and otherwise $1$.
The MAX-game is also not a FIPG (Theorem~\ref{thm:maxNotFIPG}).

\section{The SUM-Game}
In this section, we consider the equilibria in the SUM-game.
First we show that in general it is NP-hard to compute socially optimal placements of gateways.
For a graph $G$ with $\alpha \leq n-1$ or $n\diam(G) < \alpha$ we can show that equilibria always exist and provide tight price of anarchy results for these ranges.

\begin{theorem}
\label{thm:NPhardness}
Given a graph $G=(V,E)$ it is NP-hard to compute an optimal set of gateways $S\subseteq V$ that minimizes the social costs in the SUM-game.
\end{theorem}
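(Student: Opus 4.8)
The natural approach is a reduction from a known NP-hard problem whose structure matches "choose a small set of vertices to cover/shorten distances." The most promising candidate is \textsc{Dominating Set} (or \textsc{Vertex Cover}), because a gateway at $v$ makes $\delta(v,u)$ and $\delta(u,v)$ effectively $0$ for any other gateway $u$, and reduces $\delta(w,u)$ for nearby non-gateways $w$ via the shortcut. The plan is: given a graph $H$ and integer $k$ for \textsc{Dominating Set}, build a graph $G$ (possibly $G = H$, or $H$ augmented with gadgets) such that $G$ has a gateway set of social cost at most some threshold $T(k)$ if and only if $H$ has a dominating set of size $k$. Since the optimization version being NP-hard follows from the decision version being NP-hard, I would phrase everything in terms of a threshold on $c(S)$.

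First I would fix the construction. A clean choice: attach to each vertex $v$ of $H$ a large pendant structure (e.g., many leaves, or a long path) so that the dominant term in the social cost is controlled by which vertices are "covered" by a nearby gateway. The gadget sizes and the value of $\alpha$ would be chosen (polynomially bounded in $|H|$) so that (i) opening more than $k$ gateways is strictly worse than any solution with $k$ gateways, forcing $|S| \le k$ in any near-optimal profile, and (ii) among size-$k$ gateway sets, the social cost is minimized exactly when $S$ is a dominating set of $H$, because every vertex then sits at distance $\le 1$ from a gateway, collapsing its contribution. I would then compute $c(S)$ in the two cases — "$S$ dominates" versus "$S$ misses at least one vertex" — and show a strict gap between the two, so that the threshold $T$ separates YES from NO instances.

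The two directions: for the forward direction, take a dominating set $D$ of size $k$, set $S = D$, and bound $c(S) \le T$ by direct computation using that every vertex is within distance $1$ of $S$ and every pair of gateways is at communication distance $0$. For the converse, suppose $c(S) \le T$; first argue $|S| \le k$ (otherwise the $\alpha|S|$ term already exceeds $T$), then argue that if $S$ left some vertex uncovered, the extra distance contributions would push $c(S)$ above $T$ — here I would use the pendant gadgets to amplify a single uncovered vertex into a cost penalty exceeding the gap. This yields that $S$ is a dominating set of size at most $k$.

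The main obstacle is calibrating the gadgets and $\alpha$ so that the three competing forces — the per-gateway cost $\alpha$, the benefit of the high-speed shortcuts among gateways, and the distance savings to non-gateways — are balanced so that the optimum is forced to be a minimum dominating set and not some unrelated configuration (for instance, one where a few well-placed gateways shorten many long paths without dominating in $H$'s sense). In particular one must ensure the shortcut-among-gateways effect doesn't let a non-dominating set "cheat" by clustering gateways; making the original graph $H$ itself unit-distance–ish (the hop metric already is) and inflating uncovered-vertex penalties via gadget mass should handle this, but the bookkeeping to get a strict, polynomially-describable threshold is the delicate part.
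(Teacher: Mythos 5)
Your overall strategy --- reduce from a covering problem, amplify the penalty of an uncovered element with gadget mass, and separate YES from NO instances by a cost threshold --- is in the same spirit as the paper's proof, which reduces from Set-Cover using a clique anchor $C$ attached to set nodes and $w$-fold replicated element nodes, with $w:=n$, $k:=m-1$, and $\alpha := 4n(m-1)$. But as written your submission is a plan, not a proof: you never fix the gadget sizes, the value of $\alpha$, or the threshold $T(k)$, and you explicitly defer ``the bookkeeping'' as the delicate part. That bookkeeping \emph{is} the proof. In the SUM-game every gateway changes distances globally through the zero-cost backbone, so establishing that the optimum has the intended combinatorial structure requires a chain of explicit exchange arguments (in the paper: no clique node other than $c$ is open, no element node is open when its covering set node could be opened instead, every needed set node is open, no redundant set node is open, and $c$ itself is open), each certified by a concrete inequality in the chosen parameters. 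None of these appear in your proposal.

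Moreover, the calibration you defer is not merely tedious --- with the pendant-leaf gadget you sketch it is genuinely borderline. With $W$ leaves per vertex of $H$, the benefit of opening one redundant gateway (its $W+1$ associated nodes each save up to one hop to roughly all $nW$ others) and the penalty of leaving one vertex undominated (its $W+1$ associated nodes each pay one extra hop to roughly all $nW$ others) are both of order $nW^2$; the two constraints you need on $\alpha$ --- larger than the former, smaller than the latter --- therefore squeeze $\alpha$ into a window whose nonemptiness does not follow from the first-order estimate. The paper escapes this by an asymmetry in the construction: opening a \emph{useful} set node simultaneously shortens the routes of all $w$ copies of each newly covered element, so its gain scales with the coverage it adds, while a redundant gateway's gain does not. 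Without an analogous asymmetry, or a finer second-order analysis, your converse direction (``if $S$ misses a vertex then $c(S) > T$'') may simply fail. So the proposal has a genuine gap: the construction must be instantiated and the separating inequalities actually verified before the theorem is established.
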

\begin{proof}
For $n, m > 4$, let $X\coloneqq\{x_1,\ldots, x_m\}$ be a set of elements and $S_1,\ldots,S_n\subseteq X$ sets that form an instance of the NP-complete Set-Cover problem (cf.\ \textcite{karp1972}).
Given the Set-Cover instance, we construct an instance of the SUM-game as follows (cf.\ Figure~\ref{fig:np-hardness-construction}).
First, we create a clique $C$ of $k$ nodes and mark one of its nodes as $c$.
For every set $S_i$, we create a corresponding node $S_i$ and connect each set node to $c$.
For every element $x_i\in X$, we create $w$-many nodes $x_i^1,\ldots,x_i^w$ and connect all $x_i^j,i=1,\ldots,m, j=1,\ldots,w$ to the set nodes $S_l$ with $x_i\in S_l$.
The parameters are $w \coloneqq n$, $k \coloneqq m-1$, and $\alpha \coloneqq 4n(m-1)$.

For now, consider that $c$ is a gateway node in the optimal solution $S^{\OPT}$ (we will prove this claim later).
We claim that then no other node $v\in C$, $v \not= c$ is a gateway.
For this, assume that $l$ further clique nodes are open and compute the social costs decrease by closing all clique nodes except of $c$.
The decrease is at least $l\alpha - 2l(wm + n) - l(l+1) > 0$ and hence $c$ is the only node in $C\cap S^{\OPT}$.

Next, for an element $x_i$, consider the corresponding element nodes $x_i^1,\ldots,x_i^w$ and a set $S_j$ such that $x_i\in S_j$.
If there is any $x_i^l\in S$ and $S_j \not\in S$, closing $x_i^l$ and opening $S_j$ does not increase the social costs.
Hence, we can assume that in $S^{\OPT}$ there is no closed set node with an open element node.
Now, let $S_j$ be an open set node and assume that for $x_i\in S_j$, there are $l$-many open element nodes $x_i^1,\ldots,x_i^l$ (w.l.o.g.\ we consider the first $l$).
Closing all of these element nodes reduces the social costs by at least
$l\alpha - 2l(k + n + 2(l-1) + (w-l) + (m-1)w) = l\alpha - 2l(wm + n + k + l - 2) > 0$.
Now, given the closed element nodes $x_i^1,\ldots,x_i^w$ such that for all $S_j$, with $x_i\in S_j$, the set nodes are closed, opening $S_j$ reduces the social costs by at least
$2(kw + (m-1)w + (n-1)) - \alpha > 0$.
Contrary, opening a set node whose element nodes are already completely covered increases the social costs by at least
$\alpha - 2(k + mw + n-1) > 0$.

Finally, we can see that $c$ actually has to be a gateway in $S^{\OPT}$.
For this, consider an arbitrary optimal setting with all clique nodes closed (if one clique node is open, we can close it and open $c$ without increasing the social costs).
When opening $c$, we know that without increasing social costs we can close all element nodes and open corresponding set nodes.
Hence, when opening $c$ we can assume that all element nodes are closed and that for each element node a corresponding set is open.
This gives a social costs decrease by opening $c$ of at least $2kmw - \alpha > 0$.

Hence, the socially optimal solution $S^{\OPT}$ is given by a gateway node $c$ and a minimal number of set nodes such that all element nodes are covered.
\end{proof}

\begin{figure}[t]
    \centering
    % define tikz styles used in our figures
%
\tikzstyle{open}=[circle,fill=blue!50,draw=black,minimum size=6pt,inner sep=0pt]
\tikzstyle{peer}=[circle,fill=white!25,draw=black,minimum size=6pt,inner sep=0pt]
\begin{tikzpicture}
%
% CLIQUE
\draw (0,3.7) circle (1);
\node[open](c) [label=left:$c$] at (0,3) {};
\node[peer](c1) at (-0.7,3.5) {};
\node[peer](c2) at (-0.4,4.3) {};
\node[peer](c3) at (0.4,4.3) {};
\node[peer](c4) at (0.7,3.5) {};
\draw (c) -- (c1) -- (c2) -- (c3) -- (c4) -- (c);
\draw (c) -- (c2) -- (c4) -- (c1) -- (c3) -- (c);
\node at (2,3.7) {clique $C$};

% SETS
\node[peer] (s1) at (-1.5,2) {};
\node[open] (s2) at (-0.75,2) {};
\node[peer] (s3) at (0,2) {};
\node[open] (s4) at (1.5,2) {};
\node (sDots) at (0.75,2) {$\cdots$};
\draw (c) -- (s1);
\draw (c) -- (s2);
\draw (c) -- (s3);
\draw (c) -- (s4);
\node at (3.2,2.05) {sets $S_1,\dots,S_n$};

% ELEMENTS
\node[peer] (x1a) at (-2,1) {};
\node[peer] (x1b) at (-1.9,1) {};
\node[peer] (x1c) at (-1.8,1) {};

\node[peer] (x2a) at (-1,1) {};
\node[peer] (x2b) at (-0.9,1) {};
\node[peer] (x2c) at (-0.8,1) {};

\node[peer] (x3a) at (0,1) {};
\node[peer] (x3b) at (0.1,1) {};
\node[peer] (x3c) at (0.2,1) {};

\node (xDots) at (0.75,1) {$\cdots$};

\node[peer] (x4c) at (1.8,1) {};
\node[peer] (x4b) at (1.9,1) {};
\node[peer] (x4a) at (2,1) {};

\draw (s1) -- (x1b);
\draw (s1) -- (x3b);
\draw (s2) -- (x2b);
\draw (s2) -- (x1b);
\draw (s3) -- (x4b);
\draw (s3) -- (x1b);
\draw (s4) -- (x3b);
\draw (s4) -- (x4b);

\node at (4.5,1.05) {elements $x_1^{(\cdot)},\dots,x_m^{(\cdot)}$};

\end{tikzpicture}
    \caption{NP-hardness reduction from Set-Cover to optimal gateway placement.}
    \label{fig:np-hardness-construction}
\end{figure}
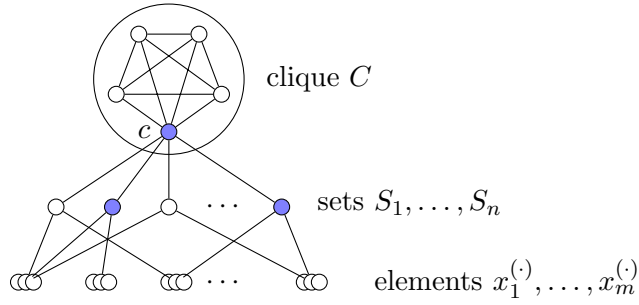

\begin{proposition}
\label{prop:sumNeBigSmallAlphaExistence}
Given a network $G=(V,E)$ and $\alpha \leq n-1$ or $\alpha > n \diam(G)$, then a Nash equilibrium always exists.
\end{proposition}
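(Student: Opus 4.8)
The plan is to handle the two parameter ranges separately by exhibiting an explicit strategy profile in each case and verifying directly that no node has an improving response. Since the only moves available to a node are to toggle its own gateway status, it suffices to recompute the single deviating node's private cost in each profile.

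For $\alpha \le n-1$ I would take the all-gateways profile $S = V$. When every node is a gateway, every communication distance is $\delta(v,u)=0$, so each node pays exactly $c_v(V)=\alpha$. The only available deviation for a node $v$ is to close. After $v$ closes, its nearest gateway is one of its neighbours (here I use that $G$ is connected, so for $n\ge 2$ node $v$ has a neighbour, and that neighbour is still a gateway), hence $\delta(v,u)=\min\{d(v,u),1\}=1$ for every $u\ne v$ and $\delta(v,v)=0$, giving $c_v(V\setminus\{v\})=n-1\ge\alpha$. Thus closing is not an improving response, and at least $n-1\ge 1$ gateways remain so the last-gateway constraint is irrelevant. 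Hence $V$ is a Nash equilibrium (the case $n=1$ being trivial).

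For $\alpha > n\,\diam(G)$ I would take $S=\{v\}$ for an arbitrary node $v$. With a single gateway the triangle inequality gives $d(u,v)+d(v,w)\ge d(u,w)$, so $\delta(u,w)=d(u,w)$ everywhere: a lone gateway never shortens any path. Therefore the gateway $v$ cannot be removed (last-gateway rule), and the only possible deviation is that some non-gateway $u$ opens. Before opening, $u$ pays $c_u(\{v\})=\sum_{w} d(u,w)\le (n-1)\,\diam(G)$; after opening it pays $c_u(\{v,u\})=\alpha+\sum_w \min\{d(u,w),d(v,w)\}\ge \alpha > n\,\diam(G)\ge (n-1)\,\diam(G)$. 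So opening strictly increases $u$'s cost and is not an improving response, which makes $\{v\}$ a Nash equilibrium.

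The only points that require any care — and they are the closest thing to an obstacle — are (i) recomputing $\delta$ correctly after the one contemplated move, since toggling a node's gateway status changes which shortcuts exist, although in both profiles the resulting $\delta$ is simple enough (identically $0$, equal to $d$, or equal to $\min\{d(\cdot,\cdot),1\}$), and (ii) invoking the rule that the last remaining gateway may not close, which is precisely what prevents the otherwise-destabilizing deviation in the profile $\{v\}$ when $\alpha$ is large.
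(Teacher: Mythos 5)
Your proof is correct and takes essentially the same approach as the paper: the all-gateways profile $S=V$ for $\alpha \leq n-1$, and a single arbitrary gateway for $\alpha > n\,\diam(G)$. You merely spell out the exact post-deviation costs and explicitly invoke the last-gateway rule, both of which the paper's shorter argument leaves implicit.
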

\begin{proof}
For $\alpha \leq n-1$, consider the strategy profile $S \coloneqq V$, i.e., every node has private costs of $\alpha$.
If any node closes, its distance costs would become at least $n-1$.
Hence, $S=V$ forms a NE.

For $\alpha > n\cdot\diam(G)$, we open an arbitrary node.
Assuming a second node would open, its distance costs decrease by not more than $ n\cdot\diam(G) < \alpha$.
\end{proof}

\begin{lemma}
\label{lemma:socialOptSmallN}
Given a network $G=(V,E)$, $n \coloneqq |V|$, and $\alpha \leq n-1$, then $S = V$ minimizes the social costs in the SUM-game.
\end{lemma}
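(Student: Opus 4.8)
The plan is to compare the social cost of the candidate optimum $S=V$ against an arbitrary strategy profile $S'\subseteq V$ by a direct counting argument. First I would record the value at $S=V$: every node is a gateway, so all communication distances $\delta(v,u)$ are $0$ and each of the $n$ nodes pays exactly the opening price $\alpha$, giving $c(V)=n\alpha$. It then suffices to show $c(S')\ge n\alpha$ for every profile $S'$.

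Next I would rewrite the social cost of $S'$ in the form $c(S')=|S'|\,\alpha+\sum_{v\in V}\sum_{u\in V}\delta(v,u)$, using that the opening-price terms contribute $|S'|\alpha$ in total and that $c(S')=\sum_v c_v(S')$ collects the communication distance of each ordered pair once (with $\delta(v,v)=0$). The key step is a lower bound on the distance sum that uses only the non-gateway nodes: if $v\notin S'$, then for every $u\ne v$ we have $\delta(v,u)\ge 1$, because $d(v,u)\ge 1$ in a simple graph and $d(v,S')+d(S',u)\ge d(v,S')\ge 1$ since $v$ is not itself a gateway, so neither alternative inside the $\min$ defining $\delta$ can be smaller than $1$. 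Hence $\sum_{u\in V}\delta(v,u)\ge n-1$ for each of the $n-|S'|$ non-gateway nodes, and therefore $\sum_{v\in V}\sum_{u\in V}\delta(v,u)\ge (n-|S'|)(n-1)$.

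Finally I would combine the two estimates and invoke the hypothesis $\alpha\le n-1$ together with $n-|S'|\ge 0$:
\[
c(S') \;\ge\; |S'|\,\alpha + (n-|S'|)(n-1) \;\ge\; |S'|\,\alpha + (n-|S'|)\,\alpha \;=\; n\alpha \;=\; c(V).
\]
Since this holds for every $S'$ (in particular for every profile with at least one gateway), $S=V$ minimizes the social cost.

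I do not anticipate a genuine obstacle; the only point needing a little care is the inequality $\delta(v,u)\ge 1$ for non-gateway $v$, which must use both that $G$ is simple (so hop distances between distinct nodes are at least $1$) and that a non-gateway has distance at least $1$ to the gateway set, so that the shortcut route through $S'$ also costs at least $1$. The rest is bookkeeping, and the argument makes clear why the threshold is exactly $\alpha=n-1$: once $\alpha$ exceeds $n-1$ the substitution $(n-|S'|)(n-1)\ge(n-|S'|)\alpha$ fails and closing some gateways can become profitable.
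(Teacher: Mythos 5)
Your proof is correct. It rests on the same elementary counting as the paper's --- each non-gateway node incurs distance cost at least $n-1\ge\alpha$ --- but the packaging differs: you give a direct lower bound $c(S')\ge|S'|\alpha+(n-|S'|)(n-1)\ge|S'|\alpha+(n-|S'|)\alpha=n\alpha=c(V)$ valid for every profile $S'$, whereas the paper runs an exchange argument, taking an optimal $S$ with $m$ closed nodes, opening all of them, and showing the social cost changes by at most $m\alpha-\bigl((n-m)+m(n-1)\bigr)<0$. The exchange version additionally collects the savings of the existing gateways (the $(n-m)$ term, each gateway's distance to a newly opened node dropping from $\ge 1$ to $0$), which makes the decrease strict even at $\alpha=n-1$ and thus shows $S=V$ is the unique optimum; your bound discards the gateways' distance terms entirely, which costs you strictness but is all the lemma asks for, and in exchange your argument avoids any appeal to optimality of $S$ and reads a bit more cleanly. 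Your identification of $\delta(v,u)\ge 1$ for non-gateway $v$ and $u\ne v$ (via both branches of the $\min$) is exactly the point that needs care, and you handle it correctly.
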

\begin{proof}
Let $S$ be a socially optimal solution and assume that there are $m$ closed nodes.
When opening all of them, for $v\in S$, the distances to all $m$ nodes reduce by at least $1$ each, while for $u\in V\setminus S$ the distances reduce by at least $n-1$ each.
Hence, changing the strategy profile to $S=V$ changes the social costs by $m\alpha - ((n-m) + m(n-1)) < 0$.
\end{proof}

Note that Lemma~\ref{lemma:socialOptSmallN} does not contradict the NP-hardness proof of Theorem~\ref{thm:NPhardness}, since in that proof $\alpha$ was chosen to be bigger than the number of nodes.

\begin{corollary}
In the SUM-game, for $\alpha \leq n-1$, the price of stability is one.
\end{corollary}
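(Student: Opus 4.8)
The plan is to observe that the corollary follows immediately from the two preceding results, so essentially no new argument is required. First I would recall from Proposition~\ref{prop:sumNeBigSmallAlphaExistence} that for $\alpha \le n-1$ the strategy profile $S = V$, in which every node is a gateway, is a Nash equilibrium: a node that closes has its distance cost jump from $\alpha$ to at least $n-1 \ge \alpha$, so no improving response exists. Next I would invoke Lemma~\ref{lemma:socialOptSmallN}, which asserts that in precisely this same range $\alpha \le n-1$ the profile $S = V$ also minimizes the social cost.

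Combining the two, $S = V$ is a Nash equilibrium whose social cost equals the optimum $\OPT$. Since the price of stability is by definition the ratio of the smallest social cost among all Nash equilibria to $\OPT$, and this ratio is always at least $1$ (because $\OPT$ is a minimum taken over all strategy profiles, in particular over all equilibria), exhibiting an equilibrium of cost exactly $\OPT$ pins the price of stability down to $1$.

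There is essentially no obstacle here; the only points worth checking are that Proposition~\ref{prop:sumNeBigSmallAlphaExistence} and Lemma~\ref{lemma:socialOptSmallN} share the hypothesis $\alpha \le n-1$, so they describe the same regime, and — as already noted in the remark following Lemma~\ref{lemma:socialOptSmallN} — that this is consistent with Theorem~\ref{thm:NPhardness}, whose reduction uses a value of $\alpha$ exceeding $n-1$ and hence lies outside this range.
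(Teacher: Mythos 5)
Your argument is exactly the paper's: the corollary is proved there in one line by noting that the socially optimal profile $S=V$ from Lemma~\ref{lemma:socialOptSmallN} is itself a Nash equilibrium (by Proposition~\ref{prop:sumNeBigSmallAlphaExistence}), so the best equilibrium achieves the optimum and the price of stability is $1$. Your write-up just spells out the same two-step combination in more detail; it is correct.
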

\begin{proof}
The socially optimal solution $S=V$ from Lemma~\ref{lemma:socialOptSmallN} is a NE.
\end{proof}

\subsection{Convergence Properties}
First, we consider some special cases where we can guarantee the existence of convergence sequences in the SUM-game.
Then, we show for the general case that the game is neither a FIPG nor a WAG.
The first proposition directly follows by using the arguments from Proposition~\ref{prop:sumNeBigSmallAlphaExistence}.

\begin{proposition}
Let $G=(V,E)$ be a graph and $S \subseteq V$ an initial set of gateways, then for $\alpha < 1$ or $\alpha > n\cdot\diam(G)$ the SUM-game is a FIPG.
\end{proposition}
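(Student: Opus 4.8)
The plan is to show that in each of the two parameter ranges, every infinite sequence of improving responses is impossible, by exhibiting a strategy profile that, once reached, is a Nash equilibrium, and arguing that it is reached after finitely many steps. In fact the cleanest route mirrors the existence arguments of Proposition~\ref{prop:sumNeBigSmallAlphaExistence}: I will track the number of gateways $|S|$ as a (near-)monotone quantity along any IR sequence.

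For the range $\alpha < 1$: I would first observe that if any non-gateway $v$ opens, its distance costs drop by at most $n-1$ (the distances to the $n-1$ other nodes each decrease by at most... in fact each by at most $\diam(G)$, but more simply the total drop is at most $(n-1)\diam(G)$, which is positive), and since $\alpha<1$ the move is always improving provided it strictly decreases distance costs; conversely, if a gateway $v$ closes, it must save $\alpha<1$ in gateway cost while its distance costs increase by at least $1$ (it loses direct $0$-distance to at least one other gateway, or if it is becoming the configuration with gateways remaining, distances to former co-gateways go up by at least $1$ each) — wait, this needs care when $|S|=1$. The key point I want to extract is: closing is never an improving response once $|S|\ge 1$, because losing gateway status costs at least $1$ in added distance (to the remaining gateways, which are at distance $\ge 1$) while saving only $\alpha<1$. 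Hence along any IR sequence $|S|$ is non-decreasing, so it stabilizes after at most $n$ opens; once it is stable, no opens and no closes occur, i.e.\ we are at a NE. The only subtlety is the very first step if $S=\emptyset$ is allowed as an initial profile — but the model requires at least one gateway, so $|S|\ge 1$ always, and a lone gateway can never close, so the monotonicity argument is clean.

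For the range $\alpha > n\cdot\diam(G)$: here I would argue the reverse monotonicity. If a non-gateway $v$ opens, it pays $\alpha$ and its distance costs drop by at most $n\cdot\diam(G) < \alpha$ (crudely, at most $n$ pairs, each distance dropping by at most $\diam(G)$), so opening is never an improving response. Hence along any IR sequence $|S|$ is non-increasing; it stabilizes after at most $n$ closes, and since $|S|\ge 1$ always, it stabilizes at some value $\ge 1$. Once $|S|$ is constant, neither opens nor closes occur, so we have reached a NE. Again the only thing to check is that the ``last gateway cannot close'' rule does not create a stuck non-equilibrium situation — but if $|S|=1$ and that gateway would like to close, the rule forbids it, and no other move is available to anyone (opening is never improving), so this too is a NE by definition of the game.

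The main obstacle — really the only delicate point — is making the ``closing costs at least $1$ in distance'' bound (for the $\alpha<1$ case) and the ``opening saves less than $\alpha$'' bound (for the $\alpha>n\diam(G)$ case) fully rigorous, i.e.\ correctly accounting for how $v$'s own distances to the \emph{other} gateways change, and confirming that the presence of other gateways acting as shortcuts cannot make $v$'s distance savings from opening exceed $n\cdot\diam(G)$. I expect both bounds to follow from the definition $\delta(u,w)=\min\{d(u,w), d(u,S)+d(S,w)\}$ by a short case analysis, exactly as in the proof of Proposition~\ref{prop:sumNeBigSmallAlphaExistence}; the proposition's remark that it ``directly follows by using the arguments'' there confirms this is intended to be routine.
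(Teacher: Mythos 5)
Your proposal is correct and follows essentially the same argument as the paper: for $\alpha<1$ closing is never improving (a gateway would save $\alpha<1$ but lose at least $1$ in distance to the remaining gateways, the lone gateway being forbidden to close), and for $\alpha>n\cdot\diam(G)$ opening is never improving (the distance savings are at most $(n-1)\diam(G)<\alpha$), so $|S|$ is monotone along any IR sequence and every sequence has length at most $n$. The delicate points you flag are indeed routine, exactly as you anticipate.
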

\begin{proof}
If $\alpha < 1$, then for every non-gateway node it is an improving response to open.
Also no gateway node will deviate from its strategy and close.
Hence, after at most $n$ improving responses, the strategy profile is $S = V$.
Otherwise, if $\alpha > n(n-1)$, no closed node will open and for every open node it is an improving response to close.
\end{proof}

\begin{proposition}
Let $G=(V,E)$ be a graph and $S$ an initial set of gateways.
If for the connected components $C_1,\ldots,C_k$ of $G\setminus S$ it holds $n - \max_{i=1,\ldots,k}\{|C_i|\} \geq \alpha$, then the SUM-subgame by invoking only nodes $V\setminus S$ is a FIPG.
\end{proposition}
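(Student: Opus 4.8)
The plan is to prove that, starting from any strategy profile reachable in the subgame, no node of $V\setminus S$ ever performs a \emph{close} as an improving response. Once this is established, every improving response in the subgame is an \emph{open}, so each node of $V\setminus S$ changes its strategy at most once and the subgame terminates after at most $|V\setminus S|\le n$ steps; hence it is a FIPG.

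First I would fix an arbitrary strategy profile $S'$ reachable in the subgame, so $S\subseteq S'\subseteq V$ and the open nodes are exactly $S'\setminus S\subseteq V\setminus S$, and consider a gateway $v\in S'\cap(V\setminus S)$, say $v\in C_i$. Writing $S''\coloneqq S'\setminus\{v\}$ for the gateway set that would result from $v$ closing, I would compare $v$'s communication cost before and after: since $v\in S'$ we have $\delta(v,u)=d(S',u)$ for every $u$ (the distance from $u$ to its nearest gateway, $v$ included), while after $v$ closes $\delta(v,u)=\min\{d(v,u),\,d(v,S'')+d(S'',u)\}$. Because $S''\subseteq S'$ and $v\notin S''$, one checks directly that the second argument of this $\min$ is at least $d(S',u)$, so $\delta(v,u)$ never decreases when $v$ closes; in particular the nodes $u\in C_i$ contribute nonnegatively to the change in $v$'s communication cost.

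The heart of the argument is to show that for each of the $n-|C_i|$ nodes $u\notin C_i$ the distance $\delta(v,u)$ strictly increases, by at least $1$, when $v$ closes. If $u$ is a gateway other than $v$ (this covers every $u\in S$), then $\delta(v,u)$ jumps from $0$ to $d(v,S'')\ge 1$. Otherwise $u$ lies in a different component $C_j$, $j\ne i$, and is not a gateway; since $C_i$ and $C_j$ are separated by $S$ in $G\setminus S$, every $v$--$u$ path in $G$ must pass through a node of $S\subseteq S''$, which (using $v\notin S$) forces $d(v,u)\ge 1+d(S'',u)$. Consequently $v$ is not the nearest gateway to $u$, so $\delta(v,u)=d(S',u)=d(S'',u)$ before closing, whereas after closing $\delta(v,u)=\min\{d(v,u),\,d(v,S'')+d(S'',u)\}\ge 1+d(S'',u)$; the increase is at least $1$. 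Summing over all $u$, closing $v$ raises its communication cost by at least $n-|C_i|\ge n-\max_{i}|C_i|\ge\alpha$, so the saving of $\alpha$ does not compensate and closing is not an improving response.

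I expect the only delicate point to be the case $u\in C_j$ with $j\ne i$ and $u$ not a gateway: there one must rule out the possibility that $v$ itself is (tied for) the nearest gateway to $u$, which is precisely what the ``every $v$--$u$ path crosses $S$'' observation delivers together with $v\notin S$. The remaining ingredients — monotonicity of $\delta(v,\cdot)$ under deletion of a gateway, and the bookkeeping of the sum over $u\in V$ — are routine, and the concluding step (at most $|V\setminus S|$ opens) is immediate.
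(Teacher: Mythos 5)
Your proof is correct and rests on the same key observation as the paper's: every shortest path from $v\in C_i$ to any of the $n-|C_i|\geq\alpha$ nodes outside $C_i$ must cross $S$, so being a gateway is worth at least $1$ on each such distance and at least $\alpha$ in total. The paper phrases this as ``opening always improves the distance term by at least $\alpha$'' and concludes convergence to $S=V$, whereas you phrase it as ``closing is never an improving response''; your direction is the cleaner one for the finite improvement property and, unlike the paper's wording, handles the boundary case $n-\max_i|C_i|=\alpha$ correctly (there opening may only break even, but closing is still not an IR, which is all that is needed).
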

\begin{proof}
First note that if $|S| \geq \alpha$ holds, then for every non-gateway it is an IR to open and with $S = V$ no gateway wants to close.
Now, we look at the nodes in $V\setminus S$ and consider an arbitrary node $v\in C_i$.
For $v$, the shortest paths to more than $\alpha$ many nodes contain gateway nodes.
Hence, becoming a gateway reduces $v$'s distance term by more than $\alpha$.
By opening further nodes, this property will not be harmed and we eventually reach the state $S = V$.
\end{proof}

\begin{proposition}
Let $G=(V,E)$ be a graph and $S$ an initial set of gateways with $|S|=1$.
If $\diam(G) > 2 \alpha + 1$ and $4 \leq \alpha \leq n-1$, then in the SUM-game there exists a sequence of IRs such that the game converges to a NE.
\end{proposition}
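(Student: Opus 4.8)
The plan is to exhibit an explicit convergence sequence that starts from a single gateway and ends in a Nash equilibrium, using the fact that under the hypothesis $4 \le \alpha \le n-1$ the profile $S=V$ is \emph{not} reachable as an equilibrium while profiles with very few gateways can be unstable. The natural target equilibrium is one where the set of gateways forms a small, well-spread ``dominating-like'' set: by Proposition~\ref{prop:sumNeBigSmallAlphaExistence} we know equilibria exist in this range, but here we need to actually \emph{walk} there from $|S|=1$. Since $\diam(G) > 2\alpha+1$, the lone initial gateway $v_0$ is far from many nodes, so some node on the far side of the graph has more than $\alpha$ nodes whose shortest paths to it do not pass through $v_0$ --- hence that node has an improving response to open. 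More precisely, I would first argue that from any profile in which some node $u$ still sees more than $\alpha$ other nodes \emph{at distance realized without using the current gateway set}, opening $u$ is an IR; this is the engine that keeps the sequence moving forward as long as the current gateway set fails to ``cover'' the graph densely enough.

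The key steps, in order, are: (1) formalize a coverage condition --- say, a gateway set $S$ is \emph{stable-from-above} if every non-gateway $v$ has at most $\alpha$ nodes $u$ with $\delta(v,u) = d(v,u) < d(v,S)+d(S,u)$, equivalently opening does not save more than $\alpha$; and symmetrically a \emph{no-redundant-gateway} condition that no open gateway can profitably close. (2) Show that from $|S|=1$ there is always an IR available (a node opens) until the stable-from-above condition holds: use $\diam(G) > 2\alpha+1$ to produce the first opening, and then a monotonicity argument --- opening nodes only shrinks distances, so once a node lacks an improving open-move it never regains one along a sequence of \emph{only} openings. (3) Restrict the sequence to openings only, so it is monotone in $S$ and therefore terminates after at most $n$ steps in some profile $S^\ast \supseteq \{v_0\}$ that is stable-from-above. (4) Check that $S^\ast$ is a genuine NE, i.e.\ also satisfies no-redundant-gateway: here I would use $\alpha \ge 4$ together with the stopping structure --- if some gateway $w \in S^\ast$ could profitably close, then just before $w$ (or the first gateway added after $v_0$) was opened it must have saved more than $\alpha$, and I would derive a contradiction with the amount it can save \emph{now}, since the other gateways added in the meantime can only have reduced the benefit of $w$ by a controlled amount; the bound $\alpha \le n-1$ prevents $S^\ast$ from needing to be all of $V$, and $4 \le \alpha$ gives enough slack in the ``$l(l+1)$''-type lower-order terms that appear when comparing the closing gain against the opening gain.

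The main obstacle I expect is step (4): guaranteeing that a sequence built purely of openings does not overshoot into a profile with a redundant gateway that would want to close --- i.e.\ that the monotone ``keep opening while profitable'' dynamics actually lands in a NE rather than merely in a stable-from-above profile. The cleanest fix is probably to interleave openings with closings greedily: whenever the current profile has a gateway that strictly wants to close, close it first; only open when no profitable close exists. One then needs a potential-style or lexicographic argument (on, say, the pair (number of gateways that want to close, $-|S|$), or on social cost along this specific restricted sequence) to show this hybrid sequence is finite; this is plausible precisely because the two ``pressures'' act in opposite directions on $|S|$ and the window $4 \le \alpha \le n-1$ with $\diam(G) > 2\alpha+1$ is exactly what makes both an opening move (far node exists) and a terminating configuration (a spread-out gateway set of size between $2$ and $n-1$) simultaneously available. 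I would present the argument in this greedy-closing-first form and push the quantitative estimates through using the same distance-counting bookkeeping as in the proof of Theorem~\ref{thm:NPhardness}.
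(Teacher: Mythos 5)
There is a genuine gap, and it originates in a false premise: you assert that for $4 \le \alpha \le n-1$ the profile $S=V$ is \emph{not} reachable as an equilibrium. In fact $S=V$ \emph{is} a Nash equilibrium throughout this range (Proposition~\ref{prop:sumNeBigSmallAlphaExistence}: a closing node would incur distance costs of at least $n-1\ge\alpha$), and the whole point of the hypothesis $\diam(G)>2\alpha+1$ is that it lets you \emph{reach} $S=V$ by openings alone. The paper's proof does exactly this: take two nodes $u,v$ with $d(u,v)>2\alpha+1$; one of them, say $v$, is at distance greater than $\alpha+1$ from the initial gateway $x$, so opening $v$ saves $\sum_{i=1}^{\lceil\alpha/2\rceil}(2i-1)>\alpha$ on the nodes of the $v$--$x$ path; then $u$ opens, then every node on the $u$--$v$ path opens in turn (each saves at least $\lceil\alpha\rceil$), and once $|S|>\alpha$ every remaining non-gateway saves more than $\alpha$ by opening. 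The sequence is monotone in $S$ and terminates at $S=V$, which is a NE. No closing move ever needs to be analyzed.

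Because you rule out this target, you aim instead at a small, well-spread gateway set, and your step (4) --- showing that the profile you land in has no gateway that wants to close --- is precisely the part you cannot complete. Your proposed repair (interleave greedy closings with openings and find a lexicographic or potential-style argument for termination) is not carried out and is unlikely to be salvageable in general: the paper shows that in this very parameter range the SUM-game is not a FIPG (Proposition~\ref{proposition:sumNotFIPGsmallAlpha}) and not even weakly acyclic (Theorem~\ref{thm:SumNotWAG}), so any termination argument must exploit the specific structure of the chosen sequence rather than a generic potential; the open/close cycles in those counterexamples are exactly of the ``redundant gateway wants to close after a neighbor opens'' form your hybrid dynamics would have to tame. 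Your ``engine'' observation --- a node far from the current gateway set saves more than $\alpha$ by opening --- is correct and matches the first step of the paper's argument, but the fix is to keep opening all the way to $S=V$ rather than to stop early.
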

\begin{proof}
Let $x\in S$ be the initial gateway and consider $v,u$ being nodes with $d(v,u) > 2 \alpha + 1$.
One of these nodes (say $v$) must have distance greater than $\alpha + 1$ to $x$.
By opening, $v$ reduces its distances to at least half of the nodes on the shortest path to $x$, i.e. by
$\sum_{i=1}^{\lceil \alpha/2 \rceil} (2i-1)
    = \lceil \alpha/2 \rceil (\lceil \alpha/2 \rceil + 1) - \lceil \alpha/2 \rceil
    > \alpha$.

Next, with $S=\{x,v\}$, also $u$ opens, since opening reduces its distances to at least half of the nodes on a shortest path from $u$ to $v$, i.e., $\lceil\alpha\rceil$ many nodes.
Considering the nodes on the shortest path from $u$ to $v$, for each of them it is an IR to open (since opening improves distance to at least $\lceil\alpha\rceil$ many nodes).
Hence, starting from one end of the path, we open them iteratively.
Finally, with $|S| > \alpha$, also all other nodes open and we reach $S = V$, which is a NE.
\end{proof}

In general it is not always possible to find IR sequences such that the SUM-game converges to a NE.
First, we show for $\alpha \in (4,n-1)$ and then for $\alpha\in (\frac{3}{32}n^2 + n , \frac{5}{32}n^2)$ that infinite IR cycles exist, i.e., the game is not a FIPG.

\begin{proposition}
\label{proposition:sumNotFIPGsmallAlpha}
Given $n\in \mathbb{N}$, $n > 7$ and $\alpha\in (4, n-1)$, in general the SUM-game is not a FIPG.
\end{proposition}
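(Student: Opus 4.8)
The plan is to exhibit, for each admissible $n$ and $\alpha\in(4,n-1)$, a concrete graph $G$ together with an infinite cyclic sequence of improving responses. Since the forbidden parameter range excludes the two "easy" regimes ($\alpha<1$ and $\alpha>n\diam(G)$), the natural idea is to build a graph whose diameter is moderate—roughly on the order of $\alpha$—so that opening a far-away node is profitable (it shortcuts more than $\alpha$ nodes along a long path, exactly as in the preceding proposition), while at the same time the geometry is arranged so that once enough gateways have opened the configuration becomes \emph{unstable} from the other side: some gateway now finds it an improving response to close, because its contribution to its own distance term via the shortcut has become redundant (there is a second, nearby gateway carrying the same shortcut benefit) and the saving of $\alpha$ outweighs the small distance increase. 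Iterating "a distant non-gateway opens, then a now-redundant gateway closes" should trace out a cycle.

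Concretely, I would take $G$ to consist of a long path (or a small number of long paths) of length $\Theta(\alpha)$ attached to a bounded-size "core" gadget, with a handful of extra pendant structures whose sole purpose is to tune the distance counts so that the two competing inequalities—"opening the path endpoint saves more than $\alpha$" and "closing a redundant gateway saves more than $\alpha$"—hold simultaneously for the whole interval $\alpha\in(4,n-1)$. The lower bound $\alpha>4$ is what makes the opening move profitable (the sum $\sum_{i=1}^{\lceil\alpha/2\rceil}(2i-1)>\alpha$ argument from the previous proposition needs $\alpha$ bounded away from small constants), and $\alpha<n-1$ guarantees there is enough "mass" of nodes in the graph that distance savings of order $n$ can dominate $\alpha$. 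After fixing the gadget I would (i) describe a small periodic family of strategy profiles $S_0\to S_1\to\cdots\to S_k=S_0$; (ii) for each transition name the single node that deviates and the direction (open/close); (iii) verify for each transition that the named deviation is strictly improving, by comparing its change in distance term against $\pm\alpha$; and (iv) check the cycle closes up. Steps (iii) reduce to a handful of elementary path-distance sums of the type already appearing in the excerpt, which I would not grind through here.

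The main obstacle is making a \emph{single} construction work \emph{uniformly} across the entire open interval $\alpha\in(4,n-1)$ rather than for one hand-picked value: the distance savings one can extract from a path of integer length are quantized, so for the "open" move to beat $\alpha$ and the "close" move to also beat $\alpha$ with no overlap-free gap, the path length, the number of parallel paths, and the sizes of the tuning gadgets must all be expressed as functions of $\lceil\alpha\rceil$ (and of $n-\lceil\alpha\rceil$) and the inequalities checked in this parametric form, paying attention to rounding. I expect this bookkeeping—choosing the gadget parameters and discharging the ceiling/floor estimates so that both improving-response inequalities are strict for every $\alpha$ in the range—to be the technical heart of the argument; everything else (connectedness of $G$, the profiles forming a genuine cycle, the "at least one gateway" side condition never being violated) is routine once the gadget is pinned down.
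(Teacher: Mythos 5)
There is a genuine gap: what you have written is a plan, not a proof. You never pin down the gadget, never list the cyclic sequence of strategy profiles, and never verify a single improving-response inequality --- and you explicitly defer exactly those steps (``which I would not grind through here'', ``the technical heart of the argument''). For a nonexistence-of-FIP statement the entire content \emph{is} the concrete graph together with the four (or $2k$) verified strict inequalities; without them nothing has been established. Moreover, the obstacle you identify as the main difficulty --- making one construction work uniformly over all $\alpha\in(4,n-1)$ --- is a misreading of what is required: the instance is allowed to depend on $n$ and $\alpha$, so it suffices to give, for each such pair, \emph{some} graph with an infinite IR cycle. The paper does precisely this by tuning a single integer parameter per $\alpha$.

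Your proposed mechanism also points in a direction that gets harder, not easier, to realize. You want paths of length $\Theta(\alpha)$ so that an endpoint saves $\sum_{i=1}^{\lceil\alpha/2\rceil}(2i-1)>\alpha$ by opening; but for $\alpha$ close to $n-1$ such a path consumes essentially all $n$ nodes, leaving no room for the ``tuning gadgets'' that are supposed to make the closing moves profitable, and the redundancy argument for closing is never quantified. The paper's construction is far simpler and has \emph{constant} diameter: a path $u-v-w$ with $r$ pendant nodes attached to $w$ and $n-r-3$ pendant nodes attached to $u$, starting from $S=\{w\}$. The relevant quantities are then plain counts of pendants rather than quadratic path sums: $u$ opens iff $\alpha<2r+2$, $v$ opens iff $\alpha<n-1$, $u$ closes iff $\alpha>r+2$, and $v$ closes iff $\alpha>r+1$, so any integer $r$ with $r+2<\alpha<\min\{n-1,2r+2\}$ (available for every $\alpha\in(4,n-1)$ when $n>7$) yields the infinite cycle. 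If you want to salvage your approach, you must either switch to a breadth-based gadget of this kind or actually carry out the parametric bookkeeping you postponed; as submitted, the proof is incomplete.
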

\begin{proof}
We construct a graph as given by Figure~\ref{fig:graph-br-cycle} (with $c=1$), i.e., we create a path $u - v - w$, connect further $r$ many nodes to $w$, as well as connect further $n-r-3$ many nodes to $u$.
(The value for $r$ is computed below.)
Starting with only $w$ being a gateway, we specify the constraints under which $u$ and $v$ form an improving response cycle:
\begin{enumerate}[I:]
    \item $u$ opens if $\alpha < 2r + 2$
    \item $v$ opens if $(n-3-r) + r + 2 = n - 1> \alpha$
    \item $u$ closes if $\alpha > r + 2$
    \item $v$ closes if $\alpha > r + 1$
\end{enumerate}
Combining these conditions, we get $r+2 < \alpha < \min\{n-1,2r + 2\}$.
For $2 \leq r \leq n-3$, the interval $(r+2,\min\{n-1,2r + 2\})$ is non-empty and thus for $4 < \alpha < n-1$ the game provides an infinite improving response cycle.
\end{proof}

\begin{figure}[t]
    \centering
    % define tikz styles used in our figures
\usetikzlibrary{patterns,arrows,decorations.pathreplacing}
\tikzstyle{open}=[circle,fill=blue!50,draw=black,minimum size=6pt,inner sep=0pt]
\tikzstyle{peer}=[circle,fill=white!25,draw=black,minimum size=6pt,inner sep=0pt]
\begin{tikzpicture}
\node[peer](u) at (-4,0) [label=below:$u$] {};
\node[peer](u1) at (-3,0) {};
\node(u2) at (-2.5,0) {};
\node(u3) at (-1.5,0) {};
\node[peer](u4) at (-1,0) {};
\node[peer](w1) at (3,0) {};
\node(w2) at (2.5,0) {};
\node(w3) at (1.5,0) {};
\node[peer](w4) at (1,0) {};
\node[peer](v) at (0,0) [label=below:$v$] {};
\node[open](w) at (4,0) [label=below:$w$] {};
\node[peer](x1) at (-5.5,1) {};
\node[peer](xm) at (-5.5,-1) {};
\node[peer](y1) at (5.5,1) {};
\node[peer](ym) at (5.5,-1) {};
\node at (5.5,0) {$\vdots$};
\node at (-5.5,0) {\vdots};
\node at (-2,0) {$\cdots$};
\node at (2,0) {$\cdots$};
\draw (u) -- (u1) -- (u2);
\draw (u3) -- (u4) -- (v);
\draw (v) -- (w4) -- (w3);
\draw (w2) -- (w1) -- (w);
\draw (x1) -- (u) -- (xm);
\draw (y1) -- (w) -- (ym);
\draw[draw=black](-3.5,0)++(145:1.5) arc (145:215:1.5);
\draw[draw=black](3.5,0)++(145:-1.5) arc (145:215:-1.5);
\node at (-4.5,-1.5) {$n-2c-r-1$ nodes};
\node at (4.5,-1.5) {$r$ nodes};
\draw [decorate,decoration={brace,amplitude=10pt}](-0.5,-0.25) -- (-3.5,-0.25) node[black,midway,yshift=-0.6cm] {$c-1$ nodes};
\draw [decorate,decoration={brace,amplitude=10pt}](3.5,-0.25) -- (0.5,-0.25) node[black,midway,yshift=-0.6cm] {$c-1$ nodes};
\end{tikzpicture}
    \caption{Best response cycle where $u$ and $v$ perform improving responses in turn.}
    \label{fig:graph-br-cycle}
\end{figure}
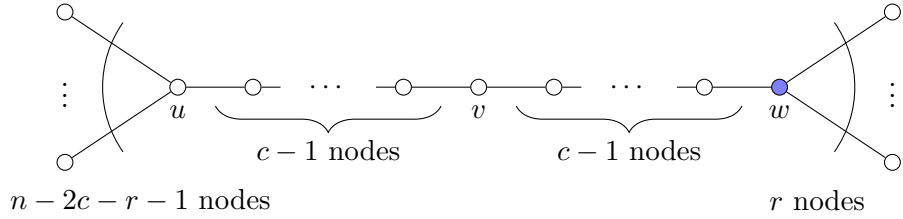

\begin{theorem}
\label{thm:sumNotFIPGgeneral}
Given $n\in \mathbb{N}$, $n > 16$ and $\alpha\in (\frac{3}{32}n^2 + n , \frac{5}{32}n^2)$, in general, the SUM-game is not a FIPG.
\end{theorem}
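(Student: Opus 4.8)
The plan is to reuse the construction of Figure~\ref{fig:graph-br-cycle} that already underlies Proposition~\ref{proposition:sumNotFIPGsmallAlpha}, but with the ``stretch'' parameter $c$ now chosen to be roughly $n/4$ instead of $1$. Concretely, I would take a path $P$ with endpoints $u,w$ and midpoint $v$, so $d(u,v)=d(v,w)=c$ and $d(u,w)=2c$; attach $r$ pendant vertices to $w$ and the remaining $n-2c-r-1$ pendant vertices to $u$; and start from $S=\{w\}$. As in the proposition, the intended infinite play is the $4$-cycle
\[
\{w\}\ \to\ \{u,w\}\ \to\ \{u,v,w\}\ \to\ \{v,w\}\ \to\ \{w\}\ \to\ \cdots ,
\]
in which $u$ and $v$ alternately open and close. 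For a non-FIPG claim it is enough that each of these four moves is an improving response in its profile -- nothing about best responses or about other players is needed -- and since $w$ is never touched, the ``last gateway may not close'' rule never interferes.

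The heart of the proof is to compute the four improving-response thresholds, which are now quadratic in $c$ (hence in $n$): placing two gateways on a path of length $2c$ creates a shortcut that shortens pairwise distances by amounts growing linearly along the path, so the aggregate saving is $\Theta(c^2)$. Writing $T:=\sum_{j=1}^{c-1}\min\{j,\,c-j\}$ (equal to $c^2/4$ up to a lower-order term), I would establish: (I) opening $u$ in $\{w\}$ saves $c^2+c+2cr$ on $u$'s distance term (the $c^2+c$ from the half of $P$ near $w$, now reachable through the $w$-end of the shortcut, plus $2cr$ from the $r$ pendants at $w$), so it is an improving response iff $\alpha<c^2+c+2cr$; (II) opening $v$ in $\{u,w\}$ saves $cn-c^2-2T$, so it is improving iff $\alpha<cn-c^2-2T$; (III) closing $u$ in $\{u,v,w\}$ raises $u$'s distance term by $\tfrac32(c-1)c+2c+rc-T$, so it is improving iff $\alpha$ exceeds this; (IV) closing $v$ in $\{v,w\}$ raises $v$'s distance term by $\tfrac12(c-1)c+c+rc-T$, so it is improving iff $\alpha$ exceeds this. (At $c=1$, $T=0$, these collapse exactly to conditions~I--IV in the proof of Proposition~\ref{proposition:sumNotFIPGsmallAlpha}, a convenient sanity check.) Since the threshold in (IV) is smaller than that in (III) by $c^2$, condition (III) subsumes (IV), and the cycle exists precisely when
\[
\tfrac32(c-1)c+2c+rc-T\ <\ \alpha\ <\ \min\bigl\{\,c^2+c+2cr,\ \ cn-c^2-2T\,\bigr\} .
\]

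It remains to instantiate the parameters. Take $c$ to be the even integer closest to $n/4$; then $T=c^2/4$ up to lower-order terms, and the upper bound $cn-c^2-2T$ equals $\tfrac{5}{32}n^2$ up to lower-order terms, matching the stated right endpoint. For any fixed $\alpha\in(\tfrac{3}{32}n^2+n,\ \tfrac{5}{32}n^2)$, conditions (I) and (III) hold exactly when $r$ lies in an interval whose lower endpoint is (about) $\tfrac{2\alpha}{n}-\tfrac{n}{8}$ and whose upper endpoint is (about) $\tfrac{4\alpha}{n}-\tfrac{5n}{16}$; this interval has length (about) $\tfrac{2\alpha}{n}-\tfrac{3n}{16}$, which exceeds $1$ -- so it contains an integer -- precisely because $\alpha>\tfrac{3}{32}n^2+n$. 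One checks that such an $r$ additionally satisfies $r\ge2$ and $2c+r\le n-1$ (using $\alpha<\tfrac{5}{32}n^2$ and $n>16$), so the construction is well defined, all four moves are strictly improving, and the game therefore admits the infinite improving-response sequence displayed above and is not a FIPG. The main obstacle is the bookkeeping behind the four thresholds: one must track the communication distance $\delta$ from $u$ (resp.\ $v$) to every vertex in each of the four profiles -- in particular how the minimum of a ``direct'' path and a ``detour through the nearest gateway'' behaves along the two half-paths when zero, one, or two gateways lie on $P$ and the observed node may or may not itself be one of them -- and carry the floors and ceilings through carefully enough that the constants $\tfrac{3}{32}$ and $\tfrac{5}{32}$ come out exactly rather than merely up to constant factors.
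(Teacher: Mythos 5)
Your proposal is correct and takes essentially the same route as the paper's proof: the same construction from Figure~\ref{fig:graph-br-cycle} with stretch $c\approx n/4$, the same four-move cycle in which $u$ and $v$ alternately open and close, and the same reduction to finding an integer $r$ between the resulting quadratic thresholds, yielding the range $\alpha\in(\tfrac{3}{32}n^2+n,\tfrac{5}{32}n^2)$. Your four threshold values agree with the paper's up to lower-order terms of size $\LDAUOmicron{c}$ (where your bookkeeping in fact appears the more careful), which does not affect the conclusion.
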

\begin{proof}
Consider the graph from Figure~\ref{fig:graph-br-cycle}, i.e., a path $u - \ldots - v - \ldots - w$ with $c-1$ many nodes between $u$ and $v$ as well as between $v$ and $w$.
We connect $r$ many nodes to $w$ and $n - 2c - r -1$ many nodes to $u$.
(The range for $r$ will be defined below, $c \coloneqq n/4$.)
Initially, only $w$ is a gateway.
Under the following constraints, $u$ and $v$ form an IR cycle:
\begin{enumerate}[I:]
    \item $u$ opens if $\alpha < \sum_{i=1}^c 2i +2rc$
    \item $v$ opens if $\alpha < 2\sum_{i=1}^{\lfloor c/2 \rfloor} 2i+(n-2c-1)c$
    \item $u$ closes if $\alpha > \sum_{i=1}^{\lfloor c/2 \rfloor} 2i+(r+c+1)c$
    \item $v$ closes if $\alpha > \sum_{i=1}^{\lfloor c/2 \rfloor} 2i+(r+1)c$
\end{enumerate}
To simplify calculations, we assume $4 \mid n$ in the remainder.
Since constraint III implies constraint IV, we only have to consider:
\begin{align}
    \alpha & < c^2 + (2r+1)c\\
    \alpha & < -\frac{3}{2}c^2 + nc\\
    \alpha & > \frac{5}{4}c^2 + (r + 3/2)c
\end{align}
Combining (1) and (3) gives
$r \in \left( \frac{1}{2}\left(\frac{\alpha}{c} - c - 1\right), \frac{\alpha}{c} - \frac{5}{4}c - 3/2 \right)$
as valid range for $r$.
Plugging in $c=n/4$ gives
$r\in (2\alpha/n - n/8 - 1/2, 4\alpha/n - 5n/16 - 3/2)$, i.e., the interval of valid values for $r$ has length of $2\alpha/n - 3n/16 - 1$.
To ensure that there exist integral solutions for $r$, we claim the interval length to be at least $1$, i.e., $2\alpha/n - 3n/16 - 1 \geq 1$, which gives $\alpha \geq n + 3n^2/32$.
Considering (2), which is $\alpha \leq \frac{5}{32}n^2$, we get
$\alpha\in (\frac{3}{32}n^2 + n , \frac{5}{32}n^2)$ as range.
For $n > 16$, this interval is non-empty, and $u$ and $v$ form an infinite IR cycle.
\end{proof}

\begin{theorem}
\label{thm:SumNotWAG}
The SUM-game is not a weakly acyclic game in general.
\end{theorem}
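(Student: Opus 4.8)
The goal is to exhibit a graph $G$ and a price $\alpha$, together with an initial profile $S_0$, such that no sequence of improving responses starting at $S_0$ ever reaches a Nash equilibrium; equivalently, the set of strategy profiles reachable from $S_0$ via improving responses must contain no Nash equilibrium. The cleanest way to force this is to build an instance that has \emph{no Nash equilibrium at all}, since then weak acyclicity fails trivially, there being no target to converge to. So the plan is to aim first for an equilibrium-free instance, and if the bookkeeping for that becomes too heavy, to retreat to the weaker task of analysing the reachable set of one carefully chosen start state and showing it avoids every equilibrium.

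The construction would extend the cyclic gadget of Theorem~\ref{thm:sumNotFIPGgeneral}: a spine path $u-\cdots-v-\cdots-w$ with a bundle of pendant nodes hung at each end, and $\alpha$ in the middle regime $n-1<\alpha<n\,\diam(G)$, so that in particular $S=V$ is not an equilibrium (since $\alpha>n-1$; contrast Proposition~\ref{prop:sumNeBigSmallAlphaExistence}). There, $u$ and $v$ run around the $4$-cycle $\{w\}\to\{w,u\}\to\{w,u,v\}\to\{w,v\}\to\{w\}$. To make the bad behaviour pervasive rather than local, I would lengthen the spine and/or make the instance symmetric --- attaching pendant bundles at several spine positions, or gluing two such gadgets at a common hub --- and then re-solve the inequality system generalizing constraints I--IV of Theorem~\ref{thm:sumNotFIPGgeneral}, this time asking for a window of $\alpha$ and bundle sizes in which \emph{all} of the required inequalities hold simultaneously.

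Verification would proceed in three steps. \emph{(1) Pendant device.} A degree-$1$ node $p$ hung off a node $x$ is within distance $1$ of a gateway whenever $x$ or a sibling of $p$ is a gateway; then if $p$ opens, every term $\delta(p,y)$ drops by at most $1$, so $p$'s total distance saving is at most $n-1<\alpha$. Hence no pendant is ever a gateway in an equilibrium, and no pendant performs an improving response in the relevant profiles, which collapses the analysis to the constantly many spine nodes. \emph{(2) The spine has no stable configuration.} Using distance computations in the style of Theorem~\ref{thm:sumNotFIPGgeneral}, one shows that for every assignment of gateways to the spine, either two ``ends'' are far enough apart that an endpoint strictly gains by opening, or two gateways sit close enough that one strictly gains by closing (its distance cost rising by less than $\alpha$), and the parameters make these two cases exhaustive. \emph{(3)} Conclude that the instance has no Nash equilibrium, hence the SUM-game is not weakly acyclic. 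In the fallback variant, Step~2 is replaced by a direct computation that from $S_0=\{w\}$ the only reachable profiles are the four cycle states, again using Step~1 to kill ``open a pendant'' escapes, together with a short check ruling out ``open a spine node'' and the two wrong closings out of $\{w,u,v\}$.

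The hard part is Step~2: controlling \emph{every} deviation in \emph{every} configuration, not just the four moves the gadget was designed around. Because each node's cost is a sum of shortest-path distances, inserting one new gateway rewires the whole metric and can make opening simultaneously attractive for many nodes --- in fact, a naive reuse of the Theorem~\ref{thm:sumNotFIPGgeneral} instance already leaves ``open a pendant of $u$'' as a profitable escape for part of its parameter range. So one needs the finite system of ``this move must not be improving'' inequalities, one per pair (configuration, candidate deviator), to point the right way for a single choice of $\alpha$ and bundle sizes, and one needs the two cases of Step~2 to genuinely exhaust all spine configurations. Making that finite system feasible, via the extra structure introduced above, is the combinatorial core of the proof; the pendant device of Step~1 is the lever that keeps the residual case analysis bounded, and is what I expect to rely on most.
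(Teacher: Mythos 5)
Your fallback route is exactly the paper's strategy: pick one initial profile ($S_0=\{w\}$), show that every improving response available at every reachable profile keeps you inside a four-state cycle, and conclude no IR sequence from $S_0$ reaches an equilibrium. But the entire content of the theorem lies in producing a gadget for which this actually checks out, and the proposal does not deliver one. You candidly note that the pendant-bundle graph of Theorem~\ref{thm:sumNotFIPGgeneral} fails (pendants of $u$ want to open from $\{w\}$ -- your Step~1 bound only applies when the pendant is already within distance $1$ of a gateway, which is false in precisely the states that matter), and the proposed repairs (``attach bundles at several spine positions'', ``glue two gadgets at a hub'', ``re-solve the inequality system'') are left unexecuted and unverified. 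The paper's fix is a structurally different graph: the path $u-v-w$, a hub $c$ adjacent to $v$, and two \emph{cliques} $X$ (size $\lceil\alpha/2\rceil$) and $Y$ (size $\lfloor\alpha/2\rfloor$) each joined completely to $c$ and to $u$ resp.\ $w$, with $\alpha=7$. The cliques keep all auxiliary nodes mutually close and at distance at most $3$ from every gateway in every reachable state, so opening never saves them more than $\alpha$; the verification is an explicit cost table showing that at each of the four reachable profiles \emph{exactly one} node has an improving response. That table is the proof; your proposal stops just short of the point where it would have to be produced.

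Your primary route -- an instance with no Nash equilibrium at all -- is both stronger than needed and almost certainly not available here. You would be forced into the regime $\alpha>n-1$ (otherwise $S=V$ is a NE by Proposition~\ref{prop:sumNeBigSmallAlphaExistence}), and the paper explicitly lists the existence of equilibria for $\alpha\in(n-1,n(n-1))$ as an open problem; note also that the paper's own working example has $\alpha=7\le n-1=10$, i.e., it lives in a regime where a Nash equilibrium ($S=V$) provably exists but is simply unreachable from $\{w\}$. So the no-NE plan should be dropped rather than treated as the default, and the fallback needs to be carried out in full with a concrete graph and complete per-node cost calculations before this counts as a proof.
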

\begin{proof}
For $\alpha \coloneqq 7$ we consider the graph as depicted in Figure~\ref{fig:graph-not-weak-acyclic}.
The graph consists of three nodes $u$, $v$, and $w$ that are connected as a line, a clique $X$ of $\lceil \alpha/2 \rceil$ nodes, a clique $Y$ of $\lfloor \alpha/2 \rfloor$ nodes, and a center node $c$.
All nodes of $X$ are connected to $c$ and to $u$, all nodes of $Y$ are connected to $c$ and to $w$, and further $c$ is connected to $v$.

We consider the initial strategy profile $S=\{w\}$ and argue that there exists a unique sequence of improving responses, such that $u$ and $v$ change their strategy in turn.
Table~\ref{table:nonWagIrs} states that there is always exactly one of these two nodes, which can improve its private costs.
(Note that we explicitly use $\alpha = 7$.)
\end{proof}

\begin{table}
\begin{enumerate}
\item node $u$ opens:\medskip\par
\begin{tabular*}{0.7\textwidth}{c|c|c|c}
    & \textbf{Cost if opened} & \textbf{Cost if closed} & \textbf{State after} \\
    \hline
    $x\in X$ & $2\alpha+2$ & $\alpha+\lfloor \frac{\alpha}{2} \rfloor+6$ & closed \\
    \hline
    $y\in Y$ & $2\alpha+\lceil \frac{\alpha}{2} \rceil+1$ & $\alpha+\lceil \frac{\alpha}{2} \rceil+6$ & closed\\
    \hline
    $u$ & $2\alpha+1$ & $\alpha+2\lfloor \frac{\alpha}{2} \rfloor+5$ & \textbf{opening} \\
    \hline
    $v$ & $2\alpha+\lceil \frac{\alpha}{2} \rceil+2$ & $2\alpha+3$ & closed\\
    \hline
    $w$ & $2\alpha+2\lceil \frac{\alpha}{2} \rceil+5$ & $\alpha+2\lceil \frac{\alpha}{2} \rceil+5$ & opened\\
    \hline
    $c$ & $2\alpha+5$ & $\alpha+5$ & closed\\
    \hline
\end{tabular*}
\medskip

\item node $v$ opens:\medskip\par
\begin{tabular*}{0.7\textwidth}{c|c|c|c}
    & \textbf{Cost if opened} & \textbf{Cost if closed} & \textbf{State after} \\
    \hline
    $x\in X$ & $2\alpha+1$ & $\alpha+\lfloor \frac{\alpha}{2} \rfloor+1$ & closed \\
    \hline
    $y\in Y$ & $2\alpha+1$ & $\alpha+\lceil \frac{\alpha}{2} \rceil+4$ & closed\\
    \hline
    $u$ & $2\alpha+1$ & $\alpha+2\lfloor \frac{\alpha}{2} \rfloor+5$ & opened \\
    \hline
    $v$ & $2\alpha+1$ & $2\alpha+3$ & \textbf{opening}\\
    \hline
    $w$ & $2\alpha+3$ & $\alpha+2\lceil \frac{\alpha}{2} \rceil+5$ & opened\\
    \hline
    $c$ & $2\alpha+1$ & $2\alpha+5$ & closed\\
    \hline
\end{tabular*}
\medskip

\item node $u$ closes:\medskip\par
\begin{tabular*}{0.7\textwidth}{c|c|c|c}
    & \textbf{Cost if opened} & \textbf{Cost if closed} & \textbf{State after} \\
    \hline
    $x\in X$ & $2\alpha$ & $\alpha+3$ & closed \\
    \hline
    $y\in Y$ & $2\alpha$ & $\alpha+3$ & closed\\
    \hline
    $u$ & $2\alpha+1$ & $\alpha+\lfloor \frac{\alpha}{2} \rfloor+4$ & \textbf{closing} \\
    \hline
    $v$ & $2\alpha+1$ & $2\alpha+3$ & opened\\
    \hline
    $w$ & $2\alpha+1$ & $\alpha+\lceil \frac{\alpha}{2} \rceil+4$ & opened\\
    \hline
    $c$ & $2\alpha$ & $\alpha+3$ & closed\\
    \hline
\end{tabular*}
\medskip

\item node $v$ closes:\medskip\par
\begin{tabular*}{0.7\textwidth}{c|c|c|c}
    & \textbf{Cost if opened} & \textbf{Cost if closed} & \textbf{State after} \\
    \hline
    $x\in X$ & $2\alpha+\lfloor \frac{\alpha}{2} \rfloor+1$ & $\alpha+\lfloor \frac{\alpha}{2} \rfloor+2$ & closed \\
    \hline
    $y\in Y$ & $2\alpha+\lceil \frac{\alpha}{2} \rceil+1$ & $\alpha+\lceil \frac{\alpha}{2} \rceil+4$ & closed \\
    \hline
    $u$ & $2\alpha+1$ & $\alpha+\lfloor \frac{\alpha}{2} \rfloor+4$ & closed \\
    \hline
    $v$ & $2\alpha+\lceil \frac{\alpha}{2} \rceil+2$ & $2\alpha+3$ & \textbf{closing}\\
    \hline
    $w$ & $2\alpha+\lceil \frac{\alpha}{2} \rceil+2$ & $\alpha+2\lceil \frac{\alpha}{2} \rceil+5$ & opened\\
    \hline
    $c$ & $2\alpha+1$ & $\alpha+4$ & closed\\
    \hline
\end{tabular*}
\end{enumerate}

\caption{Calculation of improving responses in Theorem~\ref{thm:SumNotWAG}. At each time only one IR is possible, resulting in the same strategy profile after four IRs.}
\label{table:nonWagIrs}
\end{table}

\begin{figure}[t]
    \centering
    % define tikz styles used in our figures
%
\tikzstyle{open}=[circle,fill=blue!50,draw=black,minimum size=6pt,inner sep=0pt]
\tikzstyle{peer}=[circle,fill=white!25,draw=black,minimum size=6pt,inner sep=0pt]
\begin{tikzpicture}

\fill[green!50]
	(-2.2, 0.4) to [out=315,in=225]
	(-0.6,0.6) to [out=45,in=315]
	(-0.4,2.2) to [out=135,in=45]
	(-1.7,1.7) to [out=225,in=135]
	(-2.2,0.4);

\fill[green!50]
	(0.6,0.6) to [out=335, in=270]
	(2,1) to [out=90, in=360]
	(1,2) to [out=180,in=150]
	(0.6,0.6);

\node at (-2,1.8) {$X$};
\node at (2,1.8) {$Y$};

\node[peer] (u) at (-1.5,0) [label=below:$u$] {};
\node[peer] (v) at (0,0) [label=below:$v$] {};
\node[open] (w) at (1.5,0) [label=below:$w$] {};
\node[peer] (c) at (0,1.5) [label=above:$c$] {};
\draw (u) -- (v) -- (w);
\draw (c) -- (v);

% left clique
\node[peer] (l1) at (-.75,.75){};
\node[peer] (l2) at (-2,.5){};
\node[peer] (l3) at (-.5,2){};
\node[peer] (l4) at (-1.5,1.5){};

\draw (u) -- (l1) -- (c) -- (l3) -- (l4) -- (l2) -- (u) -- (l4);
\draw (u) -- (l3) -- (l2) -- (c) -- (l4) -- (l1) -- (l2); 
\draw (l3) -- (l1);

% right clique
\node[peer] (r1) at (.75,.75){};
\node[peer] (r2) at (1.75,1){};
\node[peer] (r3) at (1,1.75){};

\draw (w) -- (r1) -- (c) -- (r3) -- (r2) -- (w);
\draw (w) -- (r3) -- (r2) -- (c) -- (r1) -- (r2); 
\draw (r3) -- (r1);

%
% CLIQUE
% \draw (0,3.7) circle (1);
% \node[open](c) [label=left:$c$] at (0,3) {};
% \node[peer](c1) at (-0.7,3.5) {};
% \node[peer](c2) at (-0.4,4.3) {};
% \node[peer](c3) at (0.4,4.3) {};
% \node[peer](c4) at (0.7,3.5) {};
% \draw (c) -- (c1) -- (c2) -- (c3) -- (c4) -- (c);
% \draw (c) -- (c2) -- (c4) -- (c1) -- (c3) -- (c);
% \node at (2,3.7) {clique $C$};
% 
% % SETS
% \node[peer] (s1) at (-1.5,2) {};
% \node[open] (s2) at (-0.75,2) {};
% \node[peer] (s3) at (0,2) {};
% \node[open] (s4) at (1.5,2) {};
% \node (sDots) at (0.75,2) {$\cdots$};
% \draw (c) -- (s1);
% \draw (c) -- (s2);
% \draw (c) -- (s3);
% \draw (c) -- (s4);
% \node at (3.2,2.05) {sets $S_1,\dots,S_n$};
% 
% % ELEMENTS
% \node[peer] (x1a) at (-2,1) {};
% \node[peer] (x1b) at (-1.9,1) {};
% \node[peer] (x1c) at (-1.8,1) {};
% 
% \node[peer] (x2a) at (-1,1) {};
% \node[peer] (x2b) at (-0.9,1) {};
% \node[peer] (x2c) at (-0.8,1) {};
% 
% \node[peer] (x3a) at (0,1) {};
% \node[peer] (x3b) at (0.1,1) {};
% \node[peer] (x3c) at (0.2,1) {};
% 
% \node (xDots) at (0.75,1) {$\cdots$};
% 
% \node[peer] (x4c) at (1.8,1) {};
% \node[peer] (x4b) at (1.9,1) {};
% \node[peer] (x4a) at (2,1) {};
% 
% \draw (s1) -- (x1b);
% \draw (s1) -- (x3b);
% \draw (s2) -- (x2b);
% \draw (s2) -- (x1b);
% \draw (s3) -- (x4b);
% \draw (s3) -- (x1b);
% \draw (s4) -- (x3b);
% \draw (s4) -- (x4b);
% 
% \node at (4.5,1.05) {elements $x_1^{(\cdot)},\dots,x_m^{(\cdot)}$};

\end{tikzpicture}
    \caption{Improving response cycle for the SUM-game with $\alpha\coloneqq 7$.
        Starting with $w$ being a gateway, nodes $u$ and $v$ change their strategies in turn and are the only nodes that want to change their strategies.}
    \label{fig:graph-not-weak-acyclic}
\end{figure}
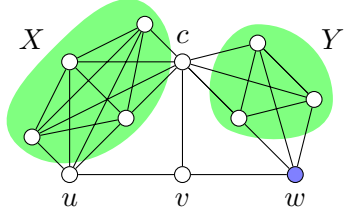

\subsection{Price of Anarchy}
In this section, we show the following result for the price of anarchy (PoA).

\begin{theorem}
\label{thm:sumPoAresults}
In the SUM-game, for $0 < \alpha < 1$ and $n(n-1) \leq \alpha$, the PoA is $1$, for $\alpha \in [1,n-1]$, it is $\LDAUTheta{n/\sqrt{\alpha}}$, and for $\alpha \in (n-1, n(n-1))$, it is $\LDAUOmicron{\sqrt{\alpha}}$.
\end{theorem}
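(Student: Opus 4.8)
The plan is to treat the four ranges of $\alpha$ separately, the workhorse being one structural fact about equilibria.

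\textbf{A structural lemma.}
I would first show that in every Nash equilibrium $S$ and for every non-gateway $v$ one has $d(v,S)=\LDAUOmicron{\sqrt{\alpha}}$. Let $k:=d(v,S)$ and take a shortest path $v=v_0,v_1,\dots,v_k=s$ to the nearest gateway $s$; no $v_i$ with $i<k$ is a gateway, and since sub-paths of shortest paths are shortest, $d(v,v_i)=i$ and $d(S,v_i)\le d(s,v_i)=k-i$. Before $v$ opens, $\delta_S(v,v_i)=\min\{i,\,k+d(S,v_i)\}=i$; once $v$ opens it is a gateway and may use the cost-$0$ network, so $\delta_{S\cup\{v\}}(v,v_i)=\min\{i,d(S,v_i)\}\le\min\{i,k-i\}$. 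Hence opening lowers $v$'s distance term by at least $\sum_{i=\lceil k/2\rceil+1}^{k}(2i-k)=\LDAUOmega{k^2}$ from these path nodes alone, and since no improving response exists this gain is $\le\alpha$, so $k=\LDAUOmicron{\sqrt{\alpha}}$. It follows that every communication distance in an equilibrium is $\le d(v,S)+d(S,u)=\LDAUOmicron{\sqrt{\alpha}}$, whence $c(S)\le|S|\alpha+n(n-1)\cdot\LDAUOmicron{\sqrt{\alpha}}=\LDAUOmicron{n^2\sqrt{\alpha}}$ whenever $\alpha\le n(n-1)$ (using $|S|\alpha\le n\alpha\le n^2\sqrt{\alpha}$).

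\textbf{Upper bounds.}
For $\alpha\in[1,n-1]$, Lemma~\ref{lemma:socialOptSmallN} gives $\OPT=c(V)=n\alpha$, so $\PoA\le\LDAUOmicron{n^2\sqrt{\alpha}}/(n\alpha)=\LDAUOmicron{n/\sqrt{\alpha}}$. For $\alpha\in(n-1,n(n-1))$ I would instead bound $\OPT$ from below: for any $S$ with $|S|=k$ the ordered pairs of non-gateways contribute at least $(n-k)(n-k-1)$ to $\sum_{v,u}\delta_S(v,u)$, so $c(S)\ge k\alpha+(n-k)(n-k-1)=\LDAUOmega{n^2}$ (distinguish $k\le n/2$ from $k>n/2$ and use $\alpha>n-1$); combined with $c(S)=\LDAUOmicron{n^2\sqrt{\alpha}}$ this yields $\PoA=\LDAUOmicron{\sqrt{\alpha}}$. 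For $0<\alpha<1$ every non-gateway strictly prefers to open, so $S=V$ is the only equilibrium and equals the optimum, hence $\PoA=1$. For $\alpha\ge n(n-1)$ I would argue that any equilibrium has a single gateway — if $|S|\ge2$, a gateway closing increases its own distance term by at most $n\diam(G)\le n(n-1)\le\alpha$, so it would close — and likewise the optimum opens exactly one gateway; since a single-gateway profile costs $\alpha$ plus a distance term below $(n-1)^2<\alpha$, a short computation bounds the ratio by $1+\LDAUomicron{1}$, giving the stated result.

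\textbf{Matching lower bound $\LDAUOmega{n/\sqrt{\alpha}}$ for $\alpha\in[1,n-1]$ --- the main obstacle.}
It remains to construct, for each such $\alpha$, an $n$-node graph carrying a Nash equilibrium $S$ with $c(S)=\LDAUOmega{n^2\sqrt{\alpha}}$; together with $\OPT=n\alpha$ this gives the bound, which the structural lemma shows is tight. The construction has to push the communication distances up to the $\LDAUTheta{\sqrt{\alpha}}$ ceiling allowed above while still admitting no improving response. The delicate point is that when a non-gateway $v$ opens it effectively ``teleports'' onto the high-speed network, shaving roughly $d(v,S)$ off its distance to every node it was reaching through the gateways; the graph must therefore be shaped so that, from each non-gateway's viewpoint, only about $\sqrt{\alpha}$ nodes are reached at least as cheaply via the gateways as directly (keeping this gain below $\alpha$), while the remaining direct distances are already $\LDAUTheta{\sqrt{\alpha}}$. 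I expect checking the two equilibrium conditions --- no non-gateway profits from opening, no gateway profits from closing --- to be by far the most technical step; once the graph is in hand, summing the $\delta$-distances to certify $c(S)=\LDAUOmega{n^2\sqrt{\alpha}}$ is routine.
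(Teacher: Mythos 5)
Your structural lemma and the upper bounds built on it are essentially the paper's own argument: the paper proves $d(x,S)\le 2\lceil\sqrt{\alpha}\rceil$ for every non-gateway in an equilibrium by exactly your path-savings computation, and then bounds $c(S)\le n\alpha+2n^2\lceil\sqrt{\alpha}\rceil$. Your lower bound $\OPT\ge k\alpha+(n-k)(n-k-1)=\LDAUOmega{n^2}$ for the range $\alpha\in(n-1,n(n-1))$ is a slightly different and cleaner route than the paper's algebra, and it works.

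The genuine gap is the matching lower bound for $\alpha\in[1,n-1]$, which you explicitly leave as an unconstructed ``main obstacle''; without it the $\Theta$-claim is not established. The construction is in fact short: take a spider graph, i.e., a center $u$ with $k=\lfloor(n-1)/(\lfloor\sqrt{\alpha}\rfloor-1)\rfloor$ disjoint paths of $\lfloor\sqrt{\alpha}\rfloor-1$ nodes each attached to it, and let the single gateway be a leaf $v$ at the tip of one leg (for $\alpha<4$ one falls back on a star or a clique with one non-central gateway). Because there is only one gateway, the no-closing condition is vacuous by the model's rule that the last gateway may not close, and a node that opens improves its distances only to the at most $\lfloor\sqrt{\alpha}\rfloor-1$ nodes on $v$'s leg, by at most $\sum_{i=1}^{\lfloor\sqrt{\alpha}\rfloor-1}2i<\alpha$ in total, so no improving response exists; the center alone has distance cost $\Omega(n\sqrt{\alpha})$ and is the best-off node, giving $c(S)=\Omega(n^2\sqrt{\alpha})$ against $\OPT=n\alpha$. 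The delicacy you anticipate is exactly what the single-leaf-gateway placement removes. Separately, your argument for $\alpha\ge n(n-1)$ does not go through as written: the total distance term of a single-gateway profile is bounded only by $n(n-1)\diam(G)$, not by $(n-1)^2$, so it need not be below $\alpha$, and the ratio $1+\LDAUomicron{1}$ does not follow from the stated inequality (on a path with $\alpha=n(n-1)$ the single-gateway equilibrium costs $\Theta(n^3)$ while roughly $\sqrt{n}$ gateways achieve $\Theta(n^{2.5})$). The paper's own proof of this range makes the same leap by asserting the equilibrium cost equals $\alpha+n(n-1)$, so you match the paper here, but this step should not be passed off as a short computation.
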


\begin{lemma}
In the SUM-game, for $0 < \alpha < 1$ the PoA is $1$.
\end{lemma}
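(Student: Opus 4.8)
The plan is to show that when $\alpha < 1$ both the set of Nash equilibria and the social optimum collapse to the single profile $S = V$, so that their ratio is trivially $1$.

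First I would argue that $S = V$ is the unique Nash equilibrium. Suppose $S \neq V$, so there is a non-gateway $v \in V \setminus S$. Since we require at least one gateway, $|S| \geq 1$; pick any $u \in S$. Because $v \notin S$, we have $d(v,S) \geq 1$ and $d(v,u) \geq 1$, hence the current communication distance satisfies $\delta(v,u) = \min\{d(v,u),\, d(v,S)\} \geq 1$. When $v$ opens, every node of $S$ becomes distance $0$ from $v$, so $v$'s distance term drops by at least $\sum_{u \in S}\delta(v,u) \geq |S| \geq 1$, while no other distance can increase (adding a gateway only creates shortcuts). The extra price paid is $\alpha < 1$, so the net change in $c_v$ is at most $\alpha - 1 < 0$, i.e.\ opening is an improving response. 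Thus no profile with a non-gateway is a NE, and since $S = V$ is a NE by Proposition~\ref{prop:sumNeBigSmallAlphaExistence} (as $\alpha < 1 \leq n-1$ for $n \geq 2$; the case $n = 1$ is trivial), it is the unique one.

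Next I would identify the social optimum. Since $\alpha < 1 \leq n-1$, Lemma~\ref{lemma:socialOptSmallN} applies directly and gives that $S = V$ also minimizes the social costs. Therefore the (unique) Nash equilibrium and the social optimum coincide, so the price of anarchy equals $1$; the inequality $\mathrm{PoA} \geq 1$ holds trivially, which closes the argument.

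I do not expect a genuine obstacle here: the only points needing care are the bookkeeping in the improving-response computation (in particular using the ``at least one gateway'' convention to guarantee $|S| \geq 1$ and hence a strict saving of at least $1$) and the degenerate case $n = 1$. Everything else is an immediate appeal to Proposition~\ref{prop:sumNeBigSmallAlphaExistence} and Lemma~\ref{lemma:socialOptSmallN}.
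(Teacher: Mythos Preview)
Your proposal is correct and follows essentially the same line as the paper's proof: both invoke Lemma~\ref{lemma:socialOptSmallN} to identify $S=V$ as the social optimum and argue that for $\alpha<1$ every non-gateway has an improving response by opening, so $S=V$ is the unique Nash equilibrium. Your treatment is simply more explicit about the improving-response bookkeeping (in particular the $|S|\geq 1$ saving), but the underlying idea is identical.
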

\begin{proof}
Let $G=(V,E)$ be a graph.
By Lemma~\ref{lemma:socialOptSmallN}, we know that $V=S$ is the social optimum.
For $\alpha < 1$, for every closed node it is an IR to open, which results in all nodes being gateways.
\end{proof}

\begin{lemma}
In the SUM-game, for $1 \leq \alpha < 2$ the PoA is $\LDAUTheta{n/\sqrt{\alpha}}$.
\end{lemma}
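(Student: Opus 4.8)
The plan is to first determine the structure of every Nash equilibrium in the range $1\le\alpha<2$ and then to read off both the upper and the lower bound from it. By Lemma~\ref{lemma:socialOptSmallN} the social optimum is $S=V$ with cost exactly $n\alpha$ (here $\alpha<2\le n-1$ for $n\ge 3$), and by Proposition~\ref{prop:sumNeBigSmallAlphaExistence} the profile $S=V$ is itself a Nash equilibrium, so the price of anarchy equals the largest social cost of any equilibrium divided by $n\alpha$. The first step is to show that every NE $S$ satisfies $S=V$ or $\abs{S}=1$: if $\emptyset\neq S\neq V$, then since $G$ is connected there is a non-gateway $v$ adjacent to a gateway, so $d(v,S)=1$; when $v$ opens, for every $g'\in S$ its communication distance $\delta(v,g')$ drops from $1$ to $0$, so the total distance saving of this deviation is at least $\abs{S}$, and the NE property forces $\abs{S}\le\alpha<2$, i.e.\ $\abs{S}=1$ (recall there is always at least one gateway).

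For the upper bound it therefore suffices to bound the social cost of a single-gateway equilibrium $S=\{g\}$ (the equilibrium $S=V$ only gives ratio $1$). Opening an arbitrary non-gateway $v$ already saves $d(v,g)$ on the single pair $\{v,g\}$ (that communication distance drops from $d(v,g)$ to $0$), so the NE property gives $d(v,g)\le\alpha<2$, hence $d(v,g)=1$ for all $v$; thus $g$ is adjacent to every node, $\diam(G)\le 2$, every communication distance is at most $2$, and the social cost is at most $2n^2+\alpha$. Since $\alpha$ and $\sqrt{\alpha}$ differ by less than a factor $2$ on $[1,2)$, dividing by $\OPT=n\alpha$ gives a ratio of $\LDAUOmicron{n/\sqrt{\alpha}}$.

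For the matching lower bound I would take $G=K_n$ with $S=\{g\}$ for an arbitrary node $g$. The unique gateway may not close, and when a non-gateway $v$ opens it saves exactly $1$: the pair $\{v,g\}$ saves $1$, while for every other non-gateway $u$ the communication distance $\delta(v,u)=1$ is unchanged because $G$ is complete; since $1\le\alpha$, this is not an improving response, so $S=\{g\}$ is a Nash equilibrium. Its social cost is $(n-1)n+\alpha=\LDAUTheta{n^2}$, whereas $\OPT=n\alpha=\LDAUTheta{n}$, so the ratio is $\LDAUOmega{n/\alpha}=\LDAUOmega{n/\sqrt{\alpha}}$. Combining with the upper bound yields $\PoA=\LDAUTheta{n/\sqrt{\alpha}}$.

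All individual steps are short; the only place that needs genuine care is the bookkeeping of how the communication distances $\delta(\cdot,\cdot)$ change when a single node opens, together with keeping the inequality in the definition of an improving response strict. The latter is exactly what places the threshold at $\alpha=1$, so at the left endpoint one should explicitly check that in $K_n$ a non-gateway's saving of $1$ equals $\alpha$ and hence is not improving, so the construction still applies there.
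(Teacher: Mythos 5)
Your proof is correct and follows essentially the same route as the paper: both arguments reduce to showing that any equilibrium is either $S=V$ or a single gateway adjacent to all other nodes, and both use the clique with one gateway (cost $n(n-1)+\alpha$ versus $\OPT=n\alpha$) for the matching lower bound. You merely spell out the equilibrium characterization that the paper asserts tersely via its diameter case distinction, and you correctly handle the boundary case $\alpha=1$ where the saving of $1$ is not a strict improvement.
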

\begin{proof}
If $\diam(G) \geq 2$, then all nodes will open and constitute a socially optimal solution.
Otherwise we have $\diam(G) = 1$, which is a clique.
Then, the only equilibria are all nodes being gateways or having exactly one gateway.
In the latter case, the social costs are $\alpha + n(n-1)$, which yield for the considered range of $\alpha$ a PoA of $\LDAUTheta{n/\sqrt{\alpha}}$.
\end{proof}

%%% PoA results for 4 <= \alpha < n-1
\begin{lemma}
\label{lemma:PoAlowerBoundAlphaSmallerN}
In the SUM-game, for $2 \leq \alpha \leq n-1$ the PoA is at least $\LDAUOmega{n/\sqrt{\alpha}}$.
\end{lemma}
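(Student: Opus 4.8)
The plan is to produce, for each $\alpha\in[2,n-1]$, a graph together with a Nash equilibrium $S$ whose social cost is $\LDAUOmega{n^2\sqrt\alpha}$. Since $\alpha\le n-1$, Lemma~\ref{lemma:socialOptSmallN} gives $\OPT=n\alpha$ (realised by $S=V$, where all communication distances are $0$), so such an $S$ certifies $\PoA\ge\LDAUOmega{n^2\sqrt\alpha}/(n\alpha)=\LDAUOmega{n/\sqrt\alpha}$.

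For $\alpha$ below a fixed constant the clique $K_n$ with a single gateway already suffices: one checks directly that it is a NE for every $\alpha\ge 1$ (opening lets a non-gateway save only its unit distance to the gateway, so $\alpha\ge1$ blocks it), and its cost $\alpha+n(n-1)$ gives $\PoA=\LDAUTheta{n/\alpha}=\LDAUTheta{n/\sqrt\alpha}$. So assume $\alpha$ is large enough that $D\coloneqq\lfloor\sqrt\alpha\rfloor-1\ge 1$. The graph consists of a \emph{core} and a \emph{tail}: the core is a path of cliques $C_1,\dots,C_D$, each of size $\LDAUTheta{n/\sqrt\alpha}$, with all edges present between consecutive cliques; the tail is a path $g-t_1-\cdots-t_D$ whose end $t_D$ is joined by one extra edge to a fixed vertex $w_0\in C_D$. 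Take $S=\{g\}$. As the unique gateway, $g$ may never close, so only openings of non-gateways have to be ruled out; since $|S|=1$ we have $\delta\equiv d$, and a short computation shows that opening changes a non-gateway $v$'s cost by $\alpha-B(v)$, where $B(v)\coloneqq\sum_u\max\{0,\,d(v,u)-d(g,u)\}$. Thus it suffices to verify $B(v)\le\alpha$ for every $v\neq g$.

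The point of the length-$D$ tail is that it makes $g$ uniformly farther from the core than any core vertex is: for core vertices $v,u$ we have $d(g,u)=d(w_0,u)+(D+1)\ge D+1>D-1\ge d(v,u)$, so core vertices contribute nothing to $B(v)$ whenever $v$ lies in the core. Hence $B(v)$ only collects the $D$ tail vertices and $g$; the extreme case is $v\in C_1$, where $d(v,t_j)-d(g,t_j)=2D-2j$ and $d(v,g)=2D$, giving $B(v)=\sum_{j=1}^{D}(2D-2j)+2D=D(D+1)\le(\sqrt\alpha-1)\sqrt\alpha<\alpha$; for a tail vertex $v=t_j$ the core again contributes $0$ and $B(t_j)\le\tfrac14 j^2+j<\alpha$. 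So $S=\{g\}$ is a Nash equilibrium. Finally, inside the core the gateway is useless, since $d(u,g)+d(g,w)\ge 2(D+1)>D-1\ge d(u,w)$ for core $u,w$, so $\delta(u,w)$ equals the clique-path distance of $u,w$. A constant fraction of the $\LDAUTheta{n^2}$ ordered pairs of core vertices lie in cliques at index-distance $\LDAUOmega{D}$, each contributing $\LDAUOmega{D}=\LDAUOmega{\sqrt\alpha}$ to the social cost, whence $c(S)=\LDAUOmega{n^2\sqrt\alpha}$ and $\PoA=\LDAUOmega{n/\sqrt\alpha}$.

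The crux is the equilibrium verification, and in particular the simultaneous choice of parameters: the tail length and the number of core cliques must both be $\LDAUTheta{\sqrt\alpha}$, with the constant small enough that every core vertex is (weakly) closer than $g$ to every other core vertex — which kills the dominant term of $B(v)$ — yet the core must remain long and wide enough that a positive fraction of vertex pairs are at communication distance $\LDAUOmega{\sqrt\alpha}$. Balancing these two requirements over the whole range $[2,n-1]$, with the small-$\alpha$ corner cases absorbed by the clique, is the only delicate point; the remaining estimates are routine.
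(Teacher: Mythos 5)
Your proof is correct, and while it follows the same overall strategy as the paper's --- exhibit a single-gateway Nash equilibrium whose typical communication distance is $\LDAUTheta{\sqrt\alpha}$ and compare it against the optimum $n\alpha$ from Lemma~\ref{lemma:socialOptSmallN} --- the concrete construction is genuinely different. The paper uses a spider: a center $u$ with $k=\left\lfloor\frac{n-1}{\lfloor\sqrt{\alpha}\rfloor - 1}\right\rfloor$ pendant paths of length $\lfloor\sqrt\alpha\rfloor-1$ each, and the gateway at the tip of one path; the equilibrium check there is that opening saves at most $\sum_{i=1}^{\lfloor\sqrt\alpha\rfloor-1}2i<\alpha$, and the cost bound comes from the center alone already paying $\LDAUOmega{n\sqrt\alpha}$. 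Your path-of-cliques core with a pendant tail carrying the gateway buys a cleaner separation of concerns: because the tail makes $g$ strictly farther from every core vertex than any core vertex is, (i) the gateway is provably useless for core-to-core traffic, which yields the $\LDAUOmega{n^2\sqrt\alpha}$ cost directly from the $\LDAUTheta{n^2}$ core pairs at index-distance $\LDAUOmega{D}$, and (ii) the opening benefit $B(v)$ collects only the $\LDAUOmicron{\sqrt\alpha}$ tail vertices and $g$, so the equilibrium verification reduces to the exact computation $B(v)\le D(D+1)\le\alpha-\sqrt\alpha<\alpha$. Your clique fallback for constant $\alpha$ plays the same role as the paper's star for $\alpha\in[2,4)$. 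In a final write-up you should make two routine points explicit: that the clique sizes can be chosen so that $D$ cliques of size $\LDAUTheta{n/\sqrt\alpha}$ plus the $D+1$ tail vertices total exactly $n$ while the core retains $\LDAUTheta{n}$ vertices, and that $v\in C_1$ indeed maximizes $B$ over the core (for $v\in C_i$ one has $d(v,t_j)-d(g,t_j)=2D+1-i-2j$, decreasing in $i$). Neither affects the correctness of the argument.
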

\begin{proof}
First, consider $\alpha\in [2,4)$ and a star graph with one center node $u$ and $n-1$ satellite nodes.
If exactly one satellite node is a gateway, this graph forms a NE with social costs $2(n-1)n$.
Comparing this to the social optimum of $\alpha n$, we get $\PoA \geq 2(n-1)/\alpha \geq (n-1)/\sqrt{\alpha}$.

Next, consider $\alpha \geq 4$.
We construct a star-like graph (cf.\ Figure~\ref{fig:sumPoALowerBound}) consisting of one center node $u$, $k \coloneqq \left\lfloor\frac{n-1}{\lfloor\sqrt{\alpha}\rfloor - 1}\right\rfloor$ many disjoint paths $p_1,\ldots,p_k$, each consisting of $\lfloor\sqrt{\alpha}\rfloor - 1$ nodes, and possibly a path $p_{k+1}$ consisting of the remaining nodes.
The first node of each path is connected to $u$.
Let one leaf node $v$ at distance exactly $\lfloor\sqrt{\alpha}\rfloor - 1$ to $u$ be a gateway.
Then, no node can perform an IR, since the maximal distance costs decrease by opening is
$\sum_{i=1}^{\lfloor\sqrt{\alpha}\rfloor - 1} 2i < \alpha$.
We estimate a social costs lower bound by considering the private costs of $u$, which are minimal for all nodes in $G$, i.e.,
$c_u(S) \geq \sum_{i=1}^{\lfloor\sqrt{\alpha}\rfloor - 1} i
    = \frac{k}{2} (\lfloor\sqrt{\alpha}\rfloor -1)\lfloor\sqrt{\alpha}\rfloor
$.
This gives for the social costs
$c(S) \geq \frac{n}{2} \left\lfloor\frac{n-1}{\lfloor\sqrt{\alpha}\rfloor - 1}\right\rfloor (\lfloor\sqrt{\alpha}\rfloor -1)\lfloor\sqrt{\alpha}\rfloor$.
Comparing this to the socially optimal cost $\alpha n$ gives $\PoA = \LDAUOmega{n / \sqrt{\alpha}}$.
\end{proof}

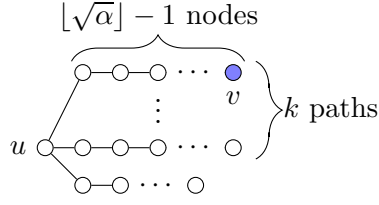
\begin{figure}[t]
    \centering
    % define tikz styles used in our figures
\usetikzlibrary{patterns,arrows,decorations.pathreplacing}
\tikzstyle{open}=[circle,fill=blue!50,draw=black,minimum size=6pt,inner sep=0pt]
\tikzstyle{peer}=[circle,fill=white!25,draw=black,minimum size=6pt,inner sep=0pt]
\begin{tikzpicture}

\node[peer] (u) at (0,0) [label=left:$u$] {};

% path 1
\node[peer] (p11) at (.5,0) {};
\node[peer] (p12) at (1,0) {};
\node[peer] (p13) at (1.5,0) {};
\node (p14) at (2,0) {$\cdots$};
\node[peer] (p15) at (2.5,0) {};

% path 2
\node[peer] (p21) at (.5,1) {};
\node[peer] (p22) at (1,1) {};
\node[peer] (p23) at (1.5,1) {};
\node (p24) at (2,1) {$\cdots$};
\node[open] (p25) at (2.5,1) [label=below:$v$] {};

% path last
\node[peer] (p31) at (.5,-.5) {};
\node[peer] (p32) at (1,-.5) {};
\node (p33) at (1.5,-.5) {$\cdots$};
\node[peer] (p34) at (2,-.5) {};

\node at (1.5,.6) {$\vdots$};

% connections
\draw (u) -- (p11) -- (p12) -- (p13) -- (p14) -- (p15);
\draw (u) -- (p21) -- (p22) -- (p23) -- (p24) -- (p25);
\draw (u) -- (p31) -- (p32) -- (p33) -- (p34);

\draw [decorate,decoration={brace,amplitude=10pt}](2.8,1.15) -- (2.8,-0.15) node[black,midway,xshift=1cm] {$k$ paths};
\draw [decorate,decoration={brace,amplitude=10pt}](.35,1.15) -- (2.65,1.15) node[black,midway,yshift=0.6cm] {$\lfloor\sqrt{\alpha}\rfloor - 1$ nodes};
\end{tikzpicture}
    \caption{NE construction for the SUM-game that gives a lower bound on the price of anarchy with $\alpha \geq 4, k \coloneqq \left\lfloor\frac{n-1}{\lfloor\sqrt{\alpha}\rfloor - 1}\right\rfloor$, and $v$ being the only gateway.}
    \label{fig:sumPoALowerBound}
\end{figure}

\begin{lemma}
In the SUM-game, for $2 \leq \alpha \leq n-1$, the PoA is $\LDAUTheta{n/\sqrt{\alpha}}$.
\end{lemma}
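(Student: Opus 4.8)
The lower bound $\PoA = \LDAUOmega{n/\sqrt{\alpha}}$ is precisely Lemma~\ref{lemma:PoAlowerBoundAlphaSmallerN}, so the plan is to prove the matching upper bound. By Lemma~\ref{lemma:socialOptSmallN}, for $\alpha\le n-1$ the optimal social cost equals $\alpha n$, so it is enough to show that every Nash equilibrium $S$ has cost $c(S) = \LDAUOmicron{n\alpha + n^2\sqrt{\alpha}}$: dividing by $\alpha n$ then yields $\PoA = \LDAUOmicron{1 + n/\sqrt{\alpha}} = \LDAUOmicron{n/\sqrt{\alpha}}$, where the last step uses that $2\le\alpha\le n-1$ forces $n/\sqrt{\alpha}\ge 1$. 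Since $c(S) = |S|\alpha + \sum_{v,u}\delta(v,u) \le n\alpha + \sum_{v,u}\delta(v,u)$, everything reduces to bounding the total communication distance; in fact I aim to bound every individual distance $\delta(v,u)$ by $\LDAUOmicron{\sqrt{\alpha}}$.

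The core claim is a radius bound: in any Nash equilibrium, every node is within distance $2\sqrt{\alpha}$ of some gateway. To prove it, fix a non-gateway $v$, let $r \coloneqq d(v,S)\ge 1$, and pick a shortest path $v = p_0, p_1, \dots, p_r$ with $p_r\in S$. Since subpaths of shortest paths are shortest and $d(v,S)=r$, one has $d(v,p_i)=i$ and $d(p_i,S)=r-i$ for each $i$, hence $\delta_S(v,p_i) = \min\{i,\, r+(r-i)\} = i$. If $v$ were to open, then $d(v,S\cup\{v\})=0$ and $d(p_i,S\cup\{v\}) = \min\{i,\,r-i\}$, so $\delta_{S\cup\{v\}}(v,p_i) = \min\{i,\,r-i\}$. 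Therefore opening decreases $v$'s distance term by at least $\sum_{i=0}^{r}\bigl(i-\min\{i,r-i\}\bigr) = \sum_{i>r/2}(2i-r) > r^2/4$, where the remaining nodes $u$ contribute nonnegatively because adding a gateway never increases any communication distance. As $v$ has no improving response, this gain is at most $\alpha$, so $r^2/4 < \alpha$, i.e.\ $r < 2\sqrt{\alpha}$ (for gateway nodes $d(v,S)=0$ is trivial).

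Given the radius bound, for every pair $v,u$ we get $\delta(v,u) \le d(v,S) + d(u,S) < 4\sqrt{\alpha}$, hence $\sum_{v,u}\delta(v,u) < 4n^2\sqrt{\alpha}$ and $c(S) < n\alpha + 4n^2\sqrt{\alpha}$. Dividing by $\OPT = \alpha n$ gives $\PoA < 1 + 4n/\sqrt{\alpha} = \LDAUOmicron{n/\sqrt{\alpha}}$, and combined with Lemma~\ref{lemma:PoAlowerBoundAlphaSmallerN} this proves $\PoA = \LDAUTheta{n/\sqrt{\alpha}}$. The only delicate point is the savings estimate: one must carefully justify the closed forms $d(v,p_i)=i$ and $d(p_i,S)=r-i$ along the shortest path, note that no term in the sum over $u$ can be negative, and evaluate the elementary sum $\sum_{i>r/2}(2i-r) = \LDAUTheta{r^2}$; I expect this bookkeeping, not any conceptual hurdle, to be the main obstacle.
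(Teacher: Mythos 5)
Your proposal is correct and follows essentially the same route as the paper: both rest on the Nash-equilibrium radius bound $d(v,S)=\LDAUOmicron{\sqrt{\alpha}}$ (obtained by summing the savings $\max\{0,2i-r\}$ along a shortest path to the nearest gateway), and then compare against $\OPT=\alpha n$ from Lemma~\ref{lemma:socialOptSmallN}. The only cosmetic difference is that you bound every pairwise distance by $d(v,S)+d(u,S)<4\sqrt{\alpha}$ directly, whereas the paper bounds gateway costs first and argues non-gateways cost no more; your version is a touch cleaner but not a different argument.
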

\begin{proof}
Let $G=(V,E)$ be a network with $|V|=n$ and $S\subseteq V$ an arbitrary NE strategy profile.
Using Lemma~\ref{lemma:socialOptSmallN}, $S = V$ forms the socially optimal solution.

Now, consider $S \not = V$.
If at least one node is a non-gateway, we get $|S| \leq \lceil\alpha\rceil$, since otherwise opening a non-gateway would reduce distance costs by more than $\alpha$.
Further, for every non-gateway $x\in V\setminus S$, we get that $d(x,S) \leq 2\lceil\sqrt{\alpha}\rceil$, since otherwise opening $x$ would reduce its private costs by at least
$\sum_{i=1}^{\lceil\sqrt{\alpha}\rceil}2i = \lceil\sqrt{\alpha}\rceil(\lceil\sqrt{\alpha}\rceil + 1) > \alpha$.
Thus, for all gateways $v\in S$, it holds $c_v(S) \leq \alpha + |V\setminus S|\cdot 2\lceil\sqrt{\alpha}\rceil$.
Since a non-gateway cannot have higher private costs than a gateway, we get
$c(S) \leq n\alpha + n\cdot|V\setminus S|\cdot 2\lceil\sqrt{\alpha}\rceil \leq n\alpha + 2 n^2 \lceil\sqrt{\alpha}\rceil$.
Comparing this to the social optimum gives
$\PoA \leq \frac{n\alpha + 2 n^2\lceil\sqrt{\alpha}\rceil}{\alpha n} \leq 1 + 2n/\lceil\sqrt{\alpha}\rceil = \LDAUOmicron{n/\sqrt{\alpha}}$.
Combining the bound with Lemma~\ref{lemma:PoAlowerBoundAlphaSmallerN}, this bound is tight.
\end{proof}

\begin{lemma}
In the SUM-game, for $n -1 < \alpha < n(n-1)$, the PoA is $\LDAUOmicron{\sqrt{\alpha}}$ and for $n(n-1)\leq \alpha$ the PoA is $1$.
\end{lemma}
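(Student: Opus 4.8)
The statement has two sub-ranges, and the plan is to organise everything around the structural picture of equilibria already developed for $\alpha\le n-1$.

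\textbf{The range $n-1<\alpha<n(n-1)$: $\PoA=\LDAUOmicron{\sqrt{\alpha}}$.} For the numerator I would re-use, essentially verbatim, the two facts about a Nash equilibrium $S$ proved in the previous lemma: if $S\neq V$ and $S$ has a non-gateway, then $|S|\le\lceil\alpha\rceil$ (otherwise opening a non-gateway shortcuts it to more than $\alpha$ gateways) and $d(x,S)\le 2\lceil\sqrt{\alpha}\rceil$ for every non-gateway $x$ (otherwise opening $x$ cuts its distance to the ``far half'' of a shortest $x$–$S$ path by more than $\alpha$); if $|S|=1$ the second fact gives $\diam(G)\le 4\lceil\sqrt{\alpha}\rceil$, and if $S=V$ all communication distances are $0$. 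In every case $\delta(v,u)\le d(v,S)+d(u,S)\le 4\lceil\sqrt{\alpha}\rceil$ for all $v,u\in V$, hence $c(S)\le n\alpha+4n^2\lceil\sqrt{\alpha}\rceil$. For the denominator I would show $\OPT=\LDAUOmega{n^2}$ whenever $\alpha>n-1$: an optimal profile either has at most $n/2$ gateways, in which case $\LDAUOmega{n^2}$ ordered pairs contain a non-gateway and each contributes at least $1$ to the communication term, or it has more than $n/2$ gateways, whence the gateway payments already exceed $(n/2)(n-1)$. Dividing yields $\PoA=\LDAUOmicron{\alpha/n+\sqrt{\alpha}}$, and since $\alpha<n(n-1)<n^2$ we have $\alpha/n<\sqrt{\alpha}$, so $\PoA=\LDAUOmicron{\sqrt{\alpha}}$.

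\textbf{The range $\alpha\ge n(n-1)$: $\PoA=1$.} The first step is to show that in this range \emph{every} Nash equilibrium has exactly one gateway. No non-gateway opens, because becoming a gateway lowers its distance term by at most $n\cdot\diam(G)\le n(n-1)\le\alpha$. Two gateways cannot coexist either: picking a gateway $v$ of minimum distance to the remaining gateways $S'\coloneqq S\setminus\{v\}$, if $v$ closes then for each $u$ with $d(v,u)\le d(S',u)$ its communication distance does not grow, and for every other $u$ it grows by at most $d(v,S')$, so the total growth is at most $(n-1)\,d(v,S')\le(n-1)\diam(G)\le(n-1)^2<n(n-1)\le\alpha$; hence closing is improving, and permitted since a gateway remains. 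With one gateway there is no shortcut edge at all, so $\delta\equiv d$ and the social cost equals $\alpha+\sum_{v,u\in V}d(v,u)$, the same value for every equilibrium. Since a single gateway is always a feasible profile, $\OPT\le\alpha+\sum_{v,u\in V}d(v,u)$, and the whole statement reduces to the matching lower bound $\OPT\ge\alpha+\sum_{v,u\in V}d(v,u)$, i.e.\ that for $\alpha$ this large no profile beats a single gateway; granting it, every equilibrium attains $\OPT$ and $\PoA=1$.

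\textbf{The main obstacle.} Everything except the inequality $\OPT\ge\alpha+\sum_{v,u\in V}d(v,u)$ is routine: the equilibrium structure lemma, the $\LDAUOmega{n^2}$ lower bound on $\OPT$, and the observation that one gateway produces no shortcut. That remaining inequality asks one to prove that adding gateways to a single one can only raise the social cost, equivalently that the total communication distance a gateway set $S$ can save, $\sum_{v,u\in V}\bigl(d(v,u)-\delta_S(v,u)\bigr)$, never exceeds $(|S|-1)\alpha$ once $\alpha\ge n(n-1)$. The naive estimate ``each of the $n(n-1)$ ordered pairs saves at most $\diam(G)$'' is far too lossy; the plan would be to charge each pair's saving to the gateway nearest one of its endpoints and exploit that the gateways' Voronoi cells partition $V$, so that the amortised saving per added gateway stays below $\alpha$. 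Making this quantitative — in particular ruling out that clustering many gateways inside a long, thin part of $G$ beats a single gateway — is the delicate point, and it is where I would concentrate the effort.
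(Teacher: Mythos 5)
Your argument for the range $n-1<\alpha<n(n-1)$ is correct and is essentially the paper's: the paper likewise bounds every communication distance in an equilibrium by $\LDAUOmicron{\sqrt{\alpha}}$ via the ``a non-gateway too far from $S$ would profit from opening'' argument, and it lower-bounds the cost of \emph{any} profile with $m$ gateways by $\alpha m+n(n-m)\geq\min\{\alpha n,\,\alpha+n(n-1)\}=\LDAUOmega{n^2}$; your case distinction on whether at least half the nodes are gateways is an equivalent substitute for that bound.

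The gap you isolate for $\alpha\geq n(n-1)$ is genuine, and you should know that it cannot be closed at this threshold, so do not sink effort into it. Your reduction is right: every equilibrium there is a single gateway, every single-gateway profile costs exactly $\alpha+\sum_{u,v}d(u,v)$ (one gateway creates no shortcut), so $\PoA=1$ forces $\OPT\geq\alpha+\sum_{u,v}d(u,v)$. That inequality is false for $\alpha=n(n-1)$. Take $G$ a path on $n$ nodes: a single gateway is an equilibrium, since a node at distance $D$ from it reduces its distance cost by at most $\sum_{i=0}^{\lfloor D/2\rfloor}(D-2i)\leq(n+1)^2/4<\alpha$ by opening, and this equilibrium costs $\alpha+\LDAUTheta{n^3}$; but $\LDAUTheta{\sqrt{n}}$ evenly spaced gateways bring every communication distance down to $\LDAUOmicron{\sqrt{n}}$ at total cost $\LDAUOmicron{n^{2.5}}$, so $\OPT=\LDAUOmicron{n^{2.5}}$ and the price of anarchy is $\LDAUOmega{\sqrt{n}}$ there, not $1$. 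A single gateway only becomes provably optimal once a gateway payment exceeds the total achievable saving, i.e.\ roughly for $\alpha\geq n(n-1)\diam(G)$. Be aware that the paper's own proof does not supply the step you are missing either: it asserts that the single-gateway equilibrium has social cost $\alpha+n(n-1)$ --- which is the trivial lower bound on $\sum_{u,v}\delta(u,v)$, not its actual value $\sum_{u,v}d(u,v)$ --- and then divides by the same quantity, concluding only $\LDAUOmicron{1}$ rather than $1$. So your instinct that this is ``the delicate point'' is exactly right; as stated, the second claim of the lemma follows neither from any completion of your outline nor from the paper's argument.
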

\begin{proof}
First, we show that for an arbitrary strategy profile $S' \subseteq V$ it holds $c(S') > \alpha|S'| + n |V\setminus S'|$.
We define $N \coloneqq V \setminus S'$ to be the set of non-gateways.
Then, we get (note $|N|(|N|+1) < n|N|$, since $|S'| \geq 1$):
\begin{align*}
c(S') & \geq |N|(n-1) + |S'|(\alpha + |N|)
        = |N|n-|N|+n\alpha+n|N|-\alpha |N|-|N|^2\\
    & = 2|N|n-|N|(|N|+1)+\alpha(n-|N|)
        > \alpha(n-|N|)+|N|n
\end{align*}
Now, we consider a NE strategy profile $S$.
If $S = V$, then the social costs are $\alpha n$.
For the case $\alpha > n(n-1)$, no node wants to open and hence exactly one gateway exists, i.e., the social costs are $\alpha + n(n-1)$.
Since the social costs lower bound is minimized with $m=1$, we get $\PoA \leq \frac{\alpha + n(n-1)}{\alpha + (n-1)n} = \LDAUOmicron{1}$.

For $n(n-1) \geq \alpha \geq n$, let $m$ be the number of gateways in a NE $S$.
In $S$, the maximal distance from a non-gateway to a gateway is $2\sqrt{\alpha}$.
This gives for any gateway $v\in S$ that $c_v(S) \leq \alpha + (n-m)2\sqrt{\alpha}$ and for any non-gateway $w\in V\setminus S$ that $c_w(S)\leq (n-1)4\sqrt{\alpha}$.
The social costs can be upper bounded by
$c(S) \leq  m\alpha - m(n-m)2\sqrt{\alpha} + (n-m) 4 \sqrt{\alpha}(n-1)
    \leq m\alpha - m^2 2\alpha + 4\sqrt{\alpha}n(n-1)$.
The global maximum of this upper bound is at $\sqrt{\alpha}/4$, which has the value of $\alpha\sqrt{\alpha}/8 + 4\sqrt{\alpha}n(n-1)$.
Comparing this to the social costs lower bound of $\alpha + n(n-1)$ yields $\PoA = \LDAUOmicron{\sqrt{\alpha}}$.
\end{proof}

\section{The MAX-Game}
In this section, we consider the equilibria in the MAX-game.

\begin{theorem}
\label{thm:NPhardnessMax}
Given a graph $G=(V,E)$, it is NP-hard to compute an optimal set of gateways $S\subseteq V$ that minimizes the social costs for the MAX-game.
\end{theorem}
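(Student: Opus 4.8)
The plan is to give a polynomial reduction from an NP-hard covering problem — most naturally \emph{Dominating Set} (equivalently, Set-Cover on the incidence structure) — in the spirit of Theorem~\ref{thm:NPhardness}, but with a genuinely different gadget. The reason the Set-Cover construction used for the SUM-game does not transfer is that the MAX objective is forgiving of a single, centrally placed gateway: on a graph of bounded diameter every gateway placement yields roughly the same eccentricities, so the number and position of gateways become irrelevant and nothing NP-hard is encoded. Hence the construction must ensure that the \emph{covering radius} $r(S) \coloneqq \max_{u\in V} d(u,S)$ of the gateway set $S$ is the dominant term of the social cost. Writing $\mathrm{ecc}_\delta(v)\coloneqq\max_{u\in V}\delta(v,u)$, note that if $v\in S$ then $\mathrm{ecc}_\delta(v)=r(S)$, while for a non-gateway $v$ one has $\mathrm{ecc}_\delta(v)\le d(v,S)+r(S)$ with near-equality when the underlying graph has large diameter. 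Making the $r(S)$-contribution dominate turns ``minimize the social cost'' into ``find a small set with small covering radius,'' which is exactly Dominating Set.

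Concretely, I would start from a Dominating Set instance $H=(U,F)$ (where, without loss of generality, $H$ is connected, of bounded degree, and of large diameter) and build $G$ by (i) dilating $H$ — for instance attaching to every vertex $u$ a short pendant path — so that $S$ dominates $H$ if and only if the covering radius $r(S)$ attains its minimum possible value; and (ii) adding a multiplicity gadget, namely a bundle of $w$ twin/pendant nodes at each vertex, so that leaving a single vertex of $H$ undominated raises the eccentricity not of one node but of $\Theta(w)$ nodes simultaneously. The price $\alpha$ and the multiplicity $w$ are then chosen in a window (the gadget amplifies this window) so that: it pays to open enough gateways to drive $r(S)$ down to its minimum, i.e.\ to dominate $H$; it never pays to open a further gateway once this has been achieved; and the $\alpha\lvert S\rvert$ term of the social cost dominates the residual variation of $\sum_v \mathrm{ecc}_\delta(v)$ among equal-sized dominating sets.

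Given these choices, the analysis follows the three-step pattern of the proof of Theorem~\ref{thm:NPhardness}. An exchange step shows that a gateway sitting on a pendant/dilation or bundle node can be moved onto the corresponding vertex of $H$ without increasing the social cost, so in an optimum all gateways lie in the copy of $U$; a ``no over-opening'' step shows that, once $r(S)$ is minimal, removing any redundant gateway strictly decreases the cost; and a ``radius'' step shows $r(S)$ is minimal exactly when the chosen vertices dominate $H$. Combining these, an optimal gateway set corresponds to a \emph{minimum} dominating set of $H$ and has social cost $\alpha\,\tau + \kappa(G)$, where $\tau$ is the minimum dominating-set size and $\kappa(G)$ depends only on $G$; hence an algorithm computing an optimal gateway set would decide the size of a minimum dominating set, which is NP-hard.

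The main obstacle is the calibration. Precisely because in the MAX-game a node's cost is governed only by its single farthest distance, one must force the penalty for an undominated vertex — felt by the whole $\Theta(w)$-node bundle — to exceed $\alpha$, while keeping $\alpha$ large enough that the $\alpha\lvert S\rvert$ term still dominates the slack in $\sum_v \mathrm{ecc}_\delta(v)$ across equal-sized solutions, and one must simultaneously rule out every ``mixed'' placement (a handful of gateways scattered on dilation or bundle nodes to shave eccentricities locally) from beating the clean dominating-set solution. As in Theorem~\ref{thm:NPhardness}, this comes down to a somewhat delicate system of inequalities, now with $\max$ rather than $\sum$ inside the per-node cost.
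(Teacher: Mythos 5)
Your starting premise is off: the Set-Cover gadget of Theorem~\ref{thm:NPhardness} \emph{does} transfer to the MAX-game, and that is exactly what the paper does. The point you miss is that even on a diameter-$4$ graph the positions of the gateways are not irrelevant under the MAX objective, because opening a set cover lowers the eccentricity of \emph{every} clique node by one, and the paper simply makes the clique large ($k=3n$ nodes, with $\alpha=3$ and one element node per element instead of $w$ copies) so that this aggregate saving of $k$ beats the opening cost $\alpha n$ of a cover. The paper's inequalities are then small and explicit. Your alternative --- reducing from Dominating Set by dilating the instance so that the covering radius $r(S)$ dominates the social cost --- is a legitimately different and in-principle viable route, but as written it is a plan rather than a proof: the entire quantitative core (the window for $\alpha$, $w$, and the dilation length $L$) is deferred, and that is precisely where such a reduction can fail.

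Concretely, three tensions in your calibration are left unexamined. First, the term $n\cdot r(S)$ that you want to dominate the cost also rewards \emph{over-opening}: once $S$ dominates $H$ and $r(S)=L+1$, one can push $r(S)$ down to $L$ by placing a gateway at (or on the pendant path of) every vertex of $H$, saving roughly another $n$; ruling this out forces $\alpha$ above roughly $w+L$ per vertex, and this must be checked against all ``mixed'' placements, not just the uniform one. Second, the residual term $\sum_v d(v,S)$ is not a constant $\kappa(G)$ among equal-radius solutions: each additional gateway at a vertex of $H$ lowers it by about $w+L$, so it systematically favors \emph{larger} dominating sets and pushes $\alpha$ into the same window from the other side; you assert this slack is dominated by $\alpha\lvert S\rvert$ but never bound it. Third, the identity $\mathrm{ecc}_\delta(v)=d(v,S)+r(S)$ (and even $\mathrm{ecc}_\delta(v)=r(S)$ for $v\in S$) holds only up to boundary effects --- for nodes close to the vertex realizing $r(S)$ the maximum is attained elsewhere --- and these error terms scale with $w$, the very parameter you use for amplification. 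None of these is fatal, but each is exactly the kind of place where the ``somewhat delicate system of inequalities'' you defer can turn out to be empty, so the argument is not complete until the window is exhibited and the exchange and no-over-opening steps are verified against gateways placed on pendant and bundle nodes. The paper's construction avoids all of this by keeping the diameter constant and counting, node by node, who saves one hop.
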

\begin{proof}
Let $X\coloneqq\{x_1,\ldots, x_m\}$ be a set of elements and $S_1,\ldots,S_n\subseteq X$ sets that form an instance of the NP-complete Set-Cover problem (cf.\ \textcite{karp1972}).
We can assume $m = 2n$ without loss of generality.
Given the Set-Cover instance, we construct an instance of the MAX-game as follows (cf.\ Figure~\ref{fig:np-hardness-construction}).
First, we create a clique $C$ of $k$ nodes and mark one of its nodes as $c$.
For every set $S_i$, we create a corresponding node $S_i$ and connect each set node to $c$.
For every element $x_i\in X$, we create a node $x_i$ and connect it to the set nodes $S_l$ with $x_i\in S_l$.
We use the parameters $\alpha \coloneqq 3$ and $k \coloneqq \alpha n = 3n$.

For now, consider that $c$ is a gateway node in the optimal solution $S^{\OPT}$ (we will prove this claim later).
We claim that then no other clique node $v\in C$, $v \not= c$ is a gateway.
For this, assume that $l$ further clique nodes are open in $S^{\OPT}$ and compute the social cost decrease by closing all clique nodes except of $c$.
If $l < k-1$, at most the distances of these $l$ nodes are increased by one each, which gives a cost decrease of $l\alpha - l > 0$.
Otherwise, the social cost decrease is at least $(k-1)\alpha - (k-1) - m - n = 2(3n-1) - 3n > 0$.
Hence, $c$ is the only node in $C\cap S^{\OPT}$.

Next, assume that there are $l$ open element nodes in $S^{\OPT}$.
If $l < m$ and if there are also set nodes open that form a set cover, by closing all element nodes only the maximal distances of these element nodes increase and the social cost decrease by at least $\alpha l - l >0$.
If there is not yet a set of set nodes open that form a set cover, we have to open at most $n$ set nodes to form a set cover.
By opening them and simultaneously closing all element nodes, the maximum distances for all clique nodes decrease by one each, which gives a social cost decrease of at least $\alpha l + k - \alpha n - l = 3 l + 3n - 3n - l > 0$.
Finally, if $l = m$, by closing all element nodes and opening a set cover, the social costs decrease by at least $\alpha m - \alpha n- m - k = 6n - 3n - 2n >0$.

Finally, we can see that $c$ actually has to be a gateway in $S^{\OPT}$.
For this, consider an arbitrary optimal setting with all clique nodes closed (if one clique node is open, we can close it and open $c$ without increasing social cost).
When opening $c$, we know that without increasing social cost we can close all element nodes and open corresponding set nodes.
Hence, when opening $c$ we can assume that all element nodes are closed and that for each element node a corresponding set is open.

Hence, the social optimal solution $S^{\OPT}$ is given by a gateway node $c$ and a minimal number of set nodes such that all element nodes are covered.
\end{proof}

\subsection{Equilibria and Convergence Properties}
Given a graph $G=(V,E)$ and $\alpha < 1$, as long as a non-gateway exists, one can always find one of them that can perform an IR by opening.
Since with $\alpha < 1$ no gateway will close, we always find a convergence sequence to a NE.
Similar for $\alpha > \diam(G)$, for a gateway, it is always an IR to close and a non-gateway will never open.
Hence, in both cases equilibria exist.

For every graph with girth of at least $4\alpha$, the following lemma also computes a NE setting.
The girth of a graph is the length of a shortest cycle in the graph, and, for acyclic graphs, it is defined to be $\infty$.
Hence, for trees always a NE exists.
\begin{lemma}
\label{lemma:maxNeExistenceMoreGeneral}
For each graph $G=(V,E)$ with $\girth(G) \geq 4\alpha$, for $\alpha \in [1,\diam(G))$ a MAX-game NE exists.
\end{lemma}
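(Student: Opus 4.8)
The plan is to prove this constructively: for such a graph I would exhibit one gateway set $S$ and check directly that no non-gateway wants to open and no gateway wants to close. Two preliminary computations keep the bookkeeping clean. First, if $v\in S$ then $d(v,S)=0$, so $\delta_S(v,u)=\min\{d(v,u),d(S,u)\}=d(S,u)$, and hence \emph{every} gateway has the same private cost $\alpha+\varrho(S)$, where I write $\varrho(S):=\max_{u\in V}d(S,u)$ for the covering radius of $S$. Second, for a non-gateway $v$ the private cost is $\max_{u}\delta_S(v,u)$ and, after $v$ opens, it is $\alpha+\max_u d(S\cup\{v\},u)$; since the two functions $u\mapsto\delta_S(v,u)$ and $u\mapsto d(S\cup\{v\},u)$ differ by at most $d(v,S)$ at every $u$ — the elementary inequality $\min\{a,p+q\}-\min\{a,q\}\le p$ with $p=d(v,S)$ — and all distances are integers, opening can lower $v$'s cost by at most $d(v,S)$. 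So the first requirement on $S$ is simply $\varrho(S)\le\lfloor\alpha\rfloor$: then no non-gateway has an improving response by opening.

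It then remains to prevent gateways from closing, which is vacuous if $|S|=1$. Writing $S':=S\setminus\{v\}$, a gateway $v$ is stable precisely when $\max_u\delta_{S'}(v,u)\ge\alpha+\varrho(S)$, i.e.\ when there is a witness node $u$ with $d(v,u)\ge\alpha+\varrho(S)$ and $d(v,S')+d(S',u)\ge\alpha+\varrho(S)$ — a node far from $v$ both directly and along any detour through $v$'s nearest surviving gateway. This forces the gateways toward peripheral parts of $G$ and forces pairs of gateways to be roughly $2\alpha$ apart, which sits in tension with the requirement $\varrho(S)\le\lfloor\alpha\rfloor$ that a fine cover imposes; reconciling these two demands is the technical core.

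For the construction I would grow $S$ greedily, adding gateways one at a time so as to keep $S$ simultaneously a radius-$\lfloor\alpha\rfloor$ dominating set and ``$2\alpha$-spread'' — any two gateways at distance about $2\alpha$, with a cleared ball of radius about $2\alpha$ around each gateway whose boundary supplies its witness — and, if $\diam(G)$ turns out too small to permit any such spreading, I would instead take $S$ to be a single well-chosen gateway and verify $d(v,S)\le\alpha$ for all $v$ by hand. The hypothesis $\girth(G)\ge4\alpha$ is what makes the greedy step go through: inside every ball of radius $<2\alpha$ the graph is a tree, so geodesics of length $<2\alpha$ are unique and their lengths add, and this is exactly what lets one pull nearby gateways apart without letting the covering radius exceed $\lfloor\alpha\rfloor$, and then read off each gateway's witness along a geodesic leaving it. The hypothesis $\alpha<\diam(G)$ is used to guarantee that $G$ is ``long enough'' for those far witnesses to exist at all.

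I expect the main obstacle to be this greedy construction: ensuring that each newly added gateway respects both the radius-$\lfloor\alpha\rfloor$ domination and the $2\alpha$-separation, and above all handling the gateways near the ``ends'' of $G$, whose witnesses cannot be found in a local cleared ball and must instead be produced from the global structure of the graph (this is where a case distinction on the size of $\diam(G)$ relative to $\alpha$ enters). Once $S$ is fixed, the verification is routine, given the cost formula $\alpha+\varrho(S)$ for gateways, the $d(v,S)$-bound on a non-gateway's possible gain, and the witness reformulation of the gateway condition.
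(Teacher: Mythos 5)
Your reduction of the equilibrium conditions is correct and matches the bookkeeping implicit in the paper: every gateway's cost is $\alpha+\varrho(S)$, a non-gateway's gain from opening is at most $d(v,S)$ (so $\varrho(S)\le\lfloor\alpha\rfloor$ kills all opening moves), and a gateway $v$ is stable iff some witness $u$ survives with $\min\{d(v,u),\,d(v,S\setminus\{v\})+d(S\setminus\{v\},u)\}\ge\alpha+\varrho(S)$. You have also correctly located where the two hypotheses must enter. But the proposal stops exactly where the content of the lemma begins: you never exhibit a set $S$ satisfying both demands, and you say yourself that reconciling them is ``the technical core'' and ``the main obstacle.'' An existence statement is not proved by listing the properties the witness object should have and sketching a greedy process whose invariant is unverified; in particular, nothing in the proposal rules out that the radius-$\lfloor\alpha\rfloor$ covering requirement and the $2\alpha$-spread requirement are jointly unsatisfiable on some graph of girth $4\alpha$ --- that joint satisfiability is precisely the question the lemma answers.

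For comparison, the paper resolves the tension with a concrete construction anchored at a diametral pair $x_1,x_2$: if $d(x_1,x_2)<2\alpha$, the girth bound forces $G$ to be a tree and a single central gateway suffices; otherwise it sets $R:=\lfloor\min\{\alpha-1,(d(x_1,x_2)-\alpha)/2\}\rfloor$, opens a maximal $R$-separated subset of the level-$R$ BFS spheres around $x_1$ and $x_2$ (girth $\ge 4\alpha$ makes these balls tree-like, which is what yields the $R$-cover there), and then greedily extends by nodes at distance exactly $\lceil\alpha\rceil$ from the current gateway set. The witnesses for gateway stability are the global extremal nodes $x_1,x_2$ themselves, each at distance exactly $R$ from $S$ and receding by $\lceil\alpha\rceil\ge\alpha$ when a gateway closes; the choice $R\le(d(x_1,x_2)-\alpha)/2$ is exactly what makes those witnesses far enough. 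This differs from your sketch, which hopes to find each gateway's witness on the boundary of a local cleared ball --- something you yourself note fails near the ends of $G$. Until such a construction is supplied and verified, the proposal is a correct problem analysis but not a proof.
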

\begin{proof}
Let $x_1,x_2$ be two maximal distant nodes in $G$.
If $d(x_1,x_2) < 2\alpha$, we get with $\girth(G) \geq 4\alpha$ that $G$ is a tree and there exists a node $v$ that has maximal distance of less than $\alpha$ to every node.
In this case, opening $v$ gives a NE.

Otherwise, define $R \coloneqq \lfloor\min\{\alpha - 1, (d(x_1,x_2) - \alpha) / 2\}\rfloor$.
Since $x_1$ and $x_2$ are at maximal distance, none of them can be connected to a leaf node.
For both of these nodes, we do the following:
We consider the breadth-first-search trees $T$ up to level $R$, rooted in $x_1$ and $x_2$, respectively.
From the nodes at level $R$, we open a maximal set such that no two gateways are at distance less than $R$.

We now claim that for every node $x$ in such a tree, there exists a gateway in distance of at most $R$.
For this, consider a shortest path to a node $u$ at level $R$.
If $u$ is not a gateway, there must be another node $u'$ also at level $R$ that is a gateway.
Since the girth is at least $4\alpha$ and $R< \alpha<\diam(G)$, the shortest path from $u$ to $u'$ can only consist of nodes of the tree and hence $d(x,u') < R$.

Next, iteratively open a maximal set of further nodes such that each new node has minimal distance of exactly $\lceil\alpha\rceil$ to a gateway.
By construction, since every non-gateway has maximal distance of $\lfloor\alpha\rfloor$ to a gateway, a non-gateway can improve its maximal distance by at most $\lfloor\alpha\rfloor$ and hence cannot perform any IR.
For every gateway $v$, it holds that its private costs are $c(v) = \alpha + R$ (with both $x_1$ and $x_2$ at maximal distance, since otherwise we get a contradiction to the maximal distance of $x_1$ and $x_2$.)
Considering the private cost change of closing $v$, its maximal distance increases by exactly $\lceil\alpha\rceil$ and hence is not an IR.
\end{proof}

\begin{theorem}
\label{thm:maxNotFIPG}
For $\alpha > 1$, in general the MAX-game is not a FIPG.
\end{theorem}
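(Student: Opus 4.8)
The plan is to show, for each fixed $\alpha>1$, that there is a graph admitting an infinite sequence of improving responses; this already violates the finite improvement property. Following the template of the SUM-game results (Proposition~\ref{proposition:sumNotFIPGsmallAlpha}, Theorem~\ref{thm:sumNotFIPGgeneral}, and the gadget of Figure~\ref{fig:graph-br-cycle}), I would keep one node $w$ as a permanent gateway and use two distinguished nodes $u,v$ that deviate alternately, so that play runs around the four profiles $\{w\}\to\{w,u\}\to\{w,u,v\}\to\{w,v\}\to\{w\}$, with $u$ opening, then $v$ opening, then $u$ closing, then $v$ closing. Because we only need a single infinite run, it suffices to verify that each of these four moves is an improving response in the profile where it is applied; it is irrelevant which other deviations are also available at those profiles.

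Write $\varepsilon_S(x):=\max_{y\in V}\delta(x,y)$ for the $\delta$-eccentricity of $x$ when the gateway set is $S$, so the private cost of $x$ is $\varepsilon_S(x)$ if $x\notin S$ and $\alpha+\varepsilon_S(x)$ if $x\in S$; since $d(x,S)=0$ whenever $x\in S$, a gateway's $\delta$-eccentricity equals the covering radius $\max_{y}d(y,S)$ of $S$. The four conditions then read: $u$ opens in $\{w\}$ iff $\varepsilon_{\{w\}}(u)-\varepsilon_{\{w,u\}}(u)>\alpha$; $v$ opens in $\{w,u\}$ iff $\varepsilon_{\{w,u\}}(v)-\varepsilon_{\{w,u,v\}}(v)>\alpha$; $u$ closes in $\{w,u,v\}$ iff $\varepsilon_{\{w,v\}}(u)-\varepsilon_{\{w,u,v\}}(u)<\alpha$; and $v$ closes in $\{w,v\}$ iff $\varepsilon_{\{w\}}(v)-\varepsilon_{\{w,v\}}(v)<\alpha$. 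Thus the first two conditions demand that opening $u$, resp.\ $v$, collapses its own eccentricity, while the last two demand that closing it again costs strictly less than $\alpha$. For the gadget I would take a backbone path of length $\Theta(\alpha)$ carrying $w$ and $u$, together with pendant paths (``reservoirs'') of length $\Theta(\alpha)$ hung off $w$ and off $u$, and place $v$ so that in $\{w,u\}$ it sits at distance $\Theta(\alpha)$ from both gateways while $\{w,v\}$ still covers $u$'s part of the graph almost as well as $\{w,u,v\}$ does. The reservoir at $w$ makes $u$'s raw eccentricity large while $\{w,u\}$ is a good two-center (so $u$ profits from opening); having some part of the graph that remains far from all of $\{w,u,v\}$ keeps the three-center covering radius large enough that $u$ can afford to close once $v$ has opened; and the position of $v$ controls the other two inequalities. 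Unlike the bounded-diameter gadget that sufficed for the SUM-game, here the gadget must scale with $\alpha$, since under the MAX objective any eccentricity change is at most the diameter. (Such a gadget can plausibly be kept acyclic, in which case Lemma~\ref{lemma:maxNeExistenceMoreGeneral} shows it nonetheless admits a Nash equilibrium; the theorem only asserts that not every improving-response sequence converges to one.)

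The remaining work is to compute the eccentricities $\varepsilon_S(\cdot)$ appearing in the four conditions by shortest-path arguments, being careful each time with the minimum in $\delta(x,y)=\min\{d(x,y),\,d(x,S)+d(S,y)\}$, i.e.\ with whether the shortest connection is the direct path or one routed through a gateway. The farthest node from a given vertex will typically sit either at the tip of a reservoir or at the midpoint of a backbone segment between two gateways, and which of these dominates changes from profile to profile. One then checks that the four inequalities on the length parameters have a common solution, and covers the whole continuum $\alpha>1$ by rounding the length parameters with floors and ceilings, exactly as for the SUM-game in Theorems~\ref{thm:sumNotFIPGgeneral} and~\ref{thm:SumNotWAG}.

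The step I expect to be the main obstacle is precisely this balancing: conditions~1--2 want the newly opened node's eccentricity to plummet, whereas conditions~3--4 want it to move by less than $\alpha$; since the ``plummet'' and the ``small move'' occur in different profiles (one where a single other special node is a gateway, one where two are), they can be reconciled only by an asymmetric placement of $u$ and $v$ with all distances tuned to within an additive constant of the correct multiples of $\alpha$ — that is where the gadget must genuinely be found rather than merely guessed.
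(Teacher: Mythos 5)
Your setup is sound: the four-profile cycle (open $u$, open $v$, close $u$, close $v$), the translation of each move into an inequality on $\delta$-eccentricities, and the observation that a gateway's $\delta$-eccentricity equals the covering radius $\max_y d(y,S)$ are all correct, and this is exactly the skeleton of the paper's argument. But the proposal stops precisely where the proof has to happen: no concrete graph is exhibited, and the four inequalities are never shown to hold simultaneously. You acknowledge this yourself ("the gadget must genuinely be found rather than merely guessed"). Since the entire content of the theorem is the existence of such an instance, what you have written is a correct reduction of the problem to a design task, not a proof. As it stands there is a genuine gap.

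For comparison, the paper's gadget is far simpler than the reservoir construction you sketch: it is a bare path on $3\lfloor\alpha\rfloor+4$ nodes, with the permanent gateway at one endpoint and the two cycling nodes placed at distances $\lfloor\alpha\rfloor+1$ and $2\lfloor\alpha\rfloor+2$ from it. On the line, each deviator's raw eccentricity in the relevant profile is $2\lfloor\alpha\rfloor+2$ and drops to $\lfloor\alpha\rfloor+1$ upon opening, so opening is improving exactly because $\lfloor\alpha\rfloor+1>\alpha$; and after the other node has opened, the remaining gateways already cover everything within $\lfloor\alpha\rfloor+1$ (respectively $2\lfloor\alpha\rfloor+2$), so closing changes the eccentricity by $0<\alpha$. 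No pendant paths and no asymmetric fine-tuning are needed; the "balancing" you flag as the main obstacle resolves itself because the floor function supplies the slack $\lfloor\alpha\rfloor+1-\alpha>0$ uniformly for all $\alpha>1$. Your instinct that the gadget must scale with $\alpha$ (unlike the bounded-diameter SUM-game gadget) is correct and is reflected in the $\Theta(\alpha)$ size of the path.
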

\begin{proof}
Consider a graph consisting of $n \coloneqq 3 \lfloor \alpha \rfloor + 4$ nodes that are connected as a line.
We denote the first node of the line as $u$, the node at distance $\lfloor\alpha\rfloor + 1$ to $u$ as $v$, and the node at distance $2\lfloor\alpha\rfloor + 2$ to $u$ as $w$.
Initially, only $u$ is a gateway.
Then, $v$ and $w$ form an improving response cycle:
\begin{enumerate}[I:]
    \item $w$ opens since $2\lfloor \alpha \rfloor + 2 > \alpha + \lfloor \alpha \rfloor +1$.
    \item $v$ opens since $2\lfloor \alpha \rfloor + 2 > \alpha + \lfloor \alpha \rfloor + 1$.
    \item $w$ closes since $\alpha + \lfloor \alpha \rfloor + 1 > \lfloor \alpha \rfloor+1$.
    \item $v$ closes since $\alpha + 2\lfloor \alpha \rfloor + 2 > 2\lfloor \alpha \rfloor + 2$.
\end{enumerate}
Hence, the game does not provide the finite improvement property.
\end{proof}

\subsection{Price of Anarchy}
For $\alpha < 1$, as argued before, $S = V$ forms the only NE.
Since this is also the socially optimal solution, both the price of anarchy and the price of stability are $1$.
For all other ranges of $\alpha$ we show:
\begin{theorem}
\label{thm:maxPoaResults}
In the MAX-game, for $\alpha \geq 1$ the PoA is $\LDAUTheta{1 + n/\sqrt{\alpha}}$.
\end{theorem}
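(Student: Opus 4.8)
The plan is to separate the argument into an upper bound and a matching lower bound, mirroring the structure of the SUM-game analysis. For the \emph{upper bound}, I would start from an arbitrary NE strategy profile $S$ in a graph $G=(V,E)$ with $|V|=n$. The key structural observation is that in any NE, every non-gateway node $x$ must have $d(x,S) \leq 2\lceil\sqrt{\alpha}\rceil$ (really $d(x,S)<2\sqrt{\alpha}+O(1)$): otherwise, opening $x$ would shrink its maximum communication distance by more than $\alpha$, contradicting equilibrium. (One has to be slightly careful here: opening $x$ reduces $x$'s distance to all nodes \emph{routed through the nearest gateway}, so the saving is at least $d(x,S)$ on those nodes, but the MAX could be attained elsewhere; the clean statement is that if the eccentricity of $x$ exceeds its eccentricity-after-opening by more than $\alpha$ then $x$ deviates, and this forces $d(x,S)$ small.) Consequently every node's MAX-distance is $O(\sqrt{\alpha})$, so every node has private cost $O(\alpha + \sqrt{\alpha}) = O(\alpha)$ if it is a gateway and $O(\sqrt{\alpha})$ if not; summing, $c(S) = O(n\alpha)$ when at least one gateway has been paid for in a cheap configuration, but more usefully $c(S) = O(\alpha\cdot|S| + \sqrt{\alpha}\cdot n)$. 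I would then need a lower bound on the social optimum $\OPT$: in any profile one pays at least $\alpha$ for the (mandatory) single gateway, and additionally the MAX-distance of at least one node is $\Omega(\sqrt{\alpha})$ unless the graph is "dense enough" to be covered cheaply — more simply, $\OPT \geq \max\{\alpha,\ \text{something}\}$, and combining $c(S)/\OPT$ with the bound $|S|\le\lceil\alpha\rceil$-type estimates from equilibrium yields $\PoA = O(1 + n/\sqrt{\alpha})$. The honest version of the upper bound is: $c(S)\le n(\alpha + 2\lceil\sqrt\alpha\rceil)$ and $\OPT \ge \alpha$, giving $\PoA \le n(1 + 2/\sqrt\alpha) $; combined with the trivial bound $\PoA\ge 1$ and a sharper $\OPT$ estimate one gets $\Theta(1+n/\sqrt\alpha)$.

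For the \emph{lower bound}, I would reuse the star-of-paths construction from Figure~\ref{fig:sumPoALowerBound} (Lemma~\ref{lemma:PoAlowerBoundAlphaSmallerN}): a center $u$ with $k \coloneqq \lfloor (n-1)/(\lfloor\sqrt\alpha\rfloor - 1)\rfloor$ disjoint paths each of length $\lfloor\sqrt\alpha\rfloor - 1$, with a single leaf $v$ opened as gateway. The equilibrium check is actually \emph{easier} for MAX than for SUM: a non-gateway node $x$ sees MAX-distance roughly $2(\lfloor\sqrt\alpha\rfloor - 1)$ (to the far end of another path via $u$ and $v$), and opening $x$ cannot improve this by more than $\lfloor\sqrt\alpha\rfloor - 1 < \alpha$ — so no node wants to open, and the lone gateway $v$ faces cost $\alpha + \text{(ecc of $v$)} \approx \alpha + 2\sqrt\alpha$ which closing would replace by an eccentricity $\Theta(\sqrt{\alpha}) < \alpha + \dots$, so it stays (and anyway the last gateway may not close). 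In this NE the social cost is dominated by the MAX-distances, each $\Theta(\sqrt\alpha)$, over $n$ nodes, so $c(S) = \Omega(n\sqrt\alpha)$. Meanwhile one can place gateways optimally — e.g.\ open every $\Theta(\alpha)$-th node along each path, paying $\Theta(\alpha)$ per path-of-length-$\sqrt\alpha$, so actually just opening $u$-adjacent structure or opening $v$ is already within $O(\alpha)$ of optimal for this sparse graph, and in fact $\OPT \le \alpha + 2\sqrt\alpha = O(\alpha)$. Hence $\PoA = \Omega(n\sqrt\alpha / \alpha) = \Omega(n/\sqrt\alpha)$, and for the regime $\alpha = \Omega(n^2)$ where $n/\sqrt\alpha = O(1)$ the matching "$1+$" constant term comes from a trivial two-node-type example or simply from $\PoA\ge 1$. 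One should also handle small $\alpha$ (e.g.\ $1\le\alpha<4$) separately with a star as in the SUM-game lemma, so that $\lfloor\sqrt\alpha\rfloor-1$ does not vanish.

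The main obstacle I anticipate is the upper-bound side, specifically pinning down the "eccentricity after opening" argument cleanly for the MAX objective: unlike the SUM-game, where opening $x$ provably saves at least $\sum_{i=1}^{d(x,S)} 2i$ by shortening distances along the path to the nearest gateway, in the MAX-game the saving is only guaranteed to be $d(x,S) - 0$ on that one witness path, and the maximizing node might lie in a direction where opening $x$ helps little; so the inference "not an IR $\Rightarrow d(x,S)\le 2\sqrt\alpha$" needs care — the correct deduction is that if $d(x,S) > \alpha$ then opening $x$ drops the distance to the nearest gateway cluster by more than $\alpha$, and since in a NE \emph{some} node must realize $x$'s eccentricity, either that witness is reached through the gateway (then $x$ saves $\ge d(x,S)>\alpha$, an IR, contradiction) or $x$'s eccentricity is already small. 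This dichotomy, together with triangle-inequality bookkeeping, should close the gap, but it is the fiddly part. A secondary, minor obstacle is making the constant-regime ("$1+$") claim precise across all of $\alpha\ge 1$, including verifying $\PoA = \Theta(1)$ rather than $\Theta(1 + n/\sqrt\alpha)$ collapsing incorrectly when $\alpha$ is enormous — but since $1 + n/\sqrt\alpha = \Theta(1)$ there automatically, and $\PoA\ge 1$ always, this is immediate once the upper bound $O(1+n/\sqrt\alpha)$ is established.
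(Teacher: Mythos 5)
Both halves of your proposal have genuine gaps, and the lower bound one is fatal as written. In your star-of-paths instance the social cost is the \emph{sum over all $n$ nodes} of their max-distances, so the optimum is not $\alpha+2\sqrt{\alpha}$: with a single gateway the cost is already $\alpha+\Theta(n\sqrt{\alpha})$, and no other placement does better, since pushing every node's eccentricity below $\varepsilon$ requires roughly $n/\varepsilon$ gateways and $\alpha n/\varepsilon+n\varepsilon$ is minimized at $\varepsilon=\sqrt{\alpha}$ with value $\Theta(n\sqrt{\alpha})$ --- exactly the cost of your equilibrium. That instance therefore has $\PoA=\Theta(1)$. The obstruction is that its diameter is only $\Theta(\sqrt{\alpha})$; a gap requires a bad equilibrium whose eccentricities far exceed $\sqrt{\alpha}$. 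The paper instead takes a spider with three legs of length $\Theta(n)$ and a single gateway at one leaf: no node can lower its eccentricity by opening, because the maximum is always realized in a third direction that the gateway does not shortcut (precisely the MAX-specific loophole you flagged yourself), so this is a NE of cost $\Theta(n^2)$, while spacing gateways every $\Theta(\sqrt{\alpha})$ hops costs only $O(n\sqrt{\alpha})$, giving $\Omega(n/\sqrt{\alpha})$.

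On the upper bound, your structural claim that every non-gateway satisfies $d(x,S)\le 2\lceil\sqrt{\alpha}\rceil$ in a NE is false, not merely fiddly: opening saves at most $d(x,S)$ on the one witness direction (there is no summation over path nodes as in the SUM-game), so the IR argument can never force $d(x,S)$ below $\Theta(\alpha)$ --- a path of length $2\alpha$ with the lone gateway at one end is already a NE, and the three-legged spider shows $d(x,S)$ can even be $\Theta(n)$. Consequently $c(S)$ can only be bounded by $O(nD+\alpha)$ with $D\coloneqq\diam(G)$, and comparing to $\OPT\ge\alpha$ yields merely $\PoA=O(n)$, which your own ``honest version'' concedes. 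The missing idea is the sharper lower bound on $\OPT$ that the paper proves: for any profile with $k$ gateways, the nodes of a diametral path incur total distance cost $\Omega(D^2/k)$ and every other node incurs $\Omega(D/k)$, so $\OPT\ge\min_k\bigl(\alpha k+\Omega((D^2+nD)/k)\bigr)=\Omega(\sqrt{\alpha nD})$. Paired with the trivial equilibrium bound $c(S)=O(nD)$, this gives $\PoA=O(nD/\sqrt{\alpha nD})=O(\sqrt{nD/\alpha})=O(n/\sqrt{\alpha})$, with no structural property of the equilibrium needed at all.
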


We now present a lower bound and then a corresponding upper bound that together yield this theorem.

\begin{lemma}
In the MAX-game, for $\alpha \geq 1$ the PoA is $\LDAUOmicron{1 + n/\sqrt{\alpha}}$.
\end{lemma}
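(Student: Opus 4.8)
The plan is to bound the social cost of an arbitrary Nash equilibrium $S$ from above and compare it to a lower bound on the social optimum. For the optimum, observe that any strategy profile has at least one gateway, so the social optimum is at least $\alpha$; moreover, since at least one node pays the communication term (if there is any non-gateway its eccentricity is at least $1$, and even the optimum cannot make everything free), a crude lower bound of $\OPT \geq \max\{\alpha, n-1\}$ — or at worst $\OPT = \LDAUOmega{\alpha + n}$ after a short case check on how many gateways the optimum uses — will suffice; I would actually just use $\OPT \geq \alpha$ together with $\OPT \geq n-1$ when $\alpha$ is small, since those are the only two regimes that matter for a $\LDAUTheta{1+n/\sqrt\alpha}$ bound.

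The heart of the argument is the upper bound on $c(S)$. First, in equilibrium, every non-gateway $x$ must have $\delta$-distance at most $2\lceil\sqrt\alpha\rceil$ to the nearest gateway: otherwise, the same telescoping estimate used in Lemma~\ref{lemma:PoAlowerBoundAlphaSmallerN} and in the SUM-game upper bound shows that $x$ opening would shave off more than $\alpha$ from its eccentricity (opening makes $x$ a gateway, so its distance to that near gateway drops to $0$, and the radius-$\lceil\sqrt\alpha\rceil$ ball argument gives the gain $\sum_{i=1}^{\lceil\sqrt\alpha\rceil} 2i > \alpha$ if the old distance were larger). Hence every node — gateway or not — has $\max_u \delta(\cdot,u) = \LDAUOmicron{\sqrt\alpha}$: for a gateway this distance is $0$ to other gateways and at most $2\lceil\sqrt\alpha\rceil$ to the non-gateways, and for a non-gateway it is at most its distance to a gateway plus that gateway's eccentricity, i.e. $\LDAUOmicron{\sqrt\alpha}$. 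Therefore $c_v(S) \leq \alpha\cdot[v\in S] + \LDAUOmicron{\sqrt\alpha}$ for every $v$, and summing, $c(S) \leq |S|\alpha + \LDAUOmicron{n\sqrt\alpha}$.

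It remains to control $|S|$. If $|S| = n$ we have $c(S) = n\alpha$, and since $\OPT \geq \max\{\alpha, \,$ the cost of the all-gateway profile is itself a candidate so $\OPT \leq n\alpha$, giving $\PoA = 1$ in that subcase — actually one must be slightly careful, but when $\alpha \leq n-1$ the all-gateway profile is optimal anyway (mirroring Lemma~\ref{lemma:socialOptSmallN}), and when $\alpha > n-1$ one argues $|S|$ is small as follows. If some non-gateway exists, then $|S| \leq \lceil\alpha\rceil$ cannot be used directly as in the SUM-game (that argument relied on summed distances), so instead I would argue geometrically: the gateways form a net with pairwise $\delta$-distance — well, gateways are at $\delta$-distance $0$ from each other, so that fails too. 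The clean route is to split on whether $\alpha \geq n-1$ or not. For $\alpha \leq n-1$: $c(S) \leq n\alpha + \LDAUOmicron{n\sqrt\alpha} = \LDAUOmicron{n\sqrt\alpha}$ and $\OPT \geq \max\{\alpha, n-1\} \geq \sqrt{\alpha(n-1)} = \LDAUOmega{\sqrt{\alpha n}}$, whence $\PoA = \LDAUOmicron{n\sqrt\alpha / \sqrt{\alpha n}} = \LDAUOmicron{\sqrt{n}} = \LDAUOmicron{1 + n/\sqrt\alpha}$ since $\alpha \leq n$. For $\alpha > n-1$: $\OPT \geq \alpha$, and $c(S) \leq |S|\alpha + \LDAUOmicron{n\sqrt\alpha} \leq n\alpha + \LDAUOmicron{n\sqrt\alpha}$, which only gives $\PoA = \LDAUOmicron{n}$ — too weak. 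So the genuine obstacle, and the step I expect to be hardest, is showing $|S| = \LDAUOmicron{1}$ (or at least $o(\sqrt\alpha / (\text{something}))$) when $\alpha$ is large: I would argue that if two gateways $v, v'$ are "far" in $G$ (distance $> 2\lceil\sqrt\alpha\rceil$ ignoring shortcuts is impossible since they give each other a $0$-shortcut) — the right statement is that a gateway $v$ has eccentricity only $R = \LDAUOmicron{\sqrt\alpha}$, so closing it raises its eccentricity by at most... and for this to not be an improving response we need $\alpha \leq$ (the increase), forcing the increase to be $\LDAUOmega{\alpha}$, i.e. there must be a node that relied on $v$ and is now $\LDAUOmega{\alpha}$ away from every other gateway through $G$; counting such "private" witnesses shows $|S| = \LDAUOmicron{n/\alpha \cdot \sqrt\alpha} = \LDAUOmicron{n/\sqrt\alpha}$, giving $c(S) = \LDAUOmicron{n\sqrt\alpha}$ again and the same $\PoA$ bound. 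Pinning down this counting argument cleanly is the crux.
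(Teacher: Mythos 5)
There is a genuine gap, and it sits at the very first substantive step: your claim that in a MAX-game equilibrium every non-gateway lies within $2\lceil\sqrt{\alpha}\rceil$ of a gateway is false. The telescoping estimate $\sum_{i=1}^{\lceil\sqrt{\alpha}\rceil}2i>\alpha$ is a SUM-game argument: it accumulates a saving of $2i$ over \emph{many different targets}. In the MAX-game the private cost only records the single largest distance, so the gain from opening is merely (old eccentricity) $-$ (new eccentricity), and a node far from all gateways need not gain anything if its eccentricity is dominated by nodes that the existing gateways do not help it reach. The paper's own lower-bound construction (Lemma~\ref{lemma:MaxPoaLowerBound}) is a direct counterexample: a spider with three legs of length $\approx n/3$ and a single gateway at one leaf is a Nash equilibrium even for $\alpha=1$, in which non-gateways sit at distance $\LDAUTheta{n}$ from the unique gateway and every node has eccentricity $\LDAUTheta{n}$. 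Hence your upper bound $c(S)\leq|S|\alpha+\LDAUOmicron{n\sqrt{\alpha}}$ fails — equilibrium cost can be $\LDAUOmega{n^2}$ — and your optimum lower bound $\max\{\alpha,n-1\}$ is then far too weak: it would only yield $\PoA=\LDAUOmicron{n}$, as you yourself observe in the $\alpha>n-1$ case. The difficulty you flag about controlling $|S|$ is a symptom of the approach, not the crux.

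The repair requires shifting the work from the equilibrium side to the optimum side. For the equilibrium, a trivial bound suffices: in a NE any gateway's cost is at most $D\coloneqq\diam(G)$ (otherwise closing would be an improving response, since after closing its cost is its eccentricity, which is at most $D$), and non-gateways likewise, so $c(S)\leq nD$. The real content is a \emph{diameter-dependent} lower bound on the optimum: placing any $k$ gateways, the nodes on a longest shortest path $p$ of length $D$ must on average be $\LDAUOmega{D/k}$ away from the nearest gateway, and the remaining $n-D$ nodes are each $\LDAUOmega{D/(2k)}$ away as well; adding the $\alpha k$ opening cost and optimizing over $k$ gives $\OPT\geq\sqrt{\alpha(D^2+2nD)}$. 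Comparing with $nD$ yields $\PoA\leq nD/\sqrt{\alpha(D^2+2nD)}\leq n/\sqrt{\alpha}$ uniformly in $D$, which is the bound you were after. Without a lower bound on $\OPT$ that grows with the diameter, the $\LDAUTheta{n^2}$-cost equilibria cannot be charged against anything of size $\LDAUOmega{n\sqrt{\alpha}}$, and the claimed ratio is unreachable.
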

\begin{proof}
Let $G=(V,E)$ be a graph and $S\subseteq V$ an arbitrary NE strategy profile.
For the social costs of $S$ it holds, $c(S) \leq nD$ with $D \coloneqq \diam(G)$.

We now consider the minimal social costs when placing exactly $k$ gateways on a longest shortest path $p$.
Having only these $k$ gateways, the total costs of the nodes on $p$ are
$
    \alpha k + 2k \sum_{i=1}^{\lfloor D/(2k) \rfloor} \left(i + \left\lfloor\frac{D}{2k}\right\rfloor\right) \geq \alpha k + \frac{3}{4k}D^2.
$
The total cost of nodes not on $p$ are at least $(n-D)\frac{D}{2k}$, which gives a social costs lower bound of
$\alpha k + \frac{3}{4k}D^2 + (n-D)\frac{D}{2k} = \alpha k + \frac{D^2 + 2nD}{4k}$.
This term is minimized with
$k=\sqrt{\frac{D^2+2nD}{4\alpha}}$ and then gives social costs of at least
$\sqrt{\alpha(D^2 + 2nD)}$.
We estimate an upper bound for the PoA by comparing to the NE social costs upper bound and get
$\frac{nD}{\sqrt{\alpha(D^2 + 2nD)}} \leq n / \sqrt{\alpha}$.
\end{proof}

\begin{lemma}
\label{lemma:MaxPoaLowerBound}
In the MAX-game for $\alpha \geq 1$, the PoA is $\LDAUOmega{1 + n/\sqrt{\alpha}}$.
\end{lemma}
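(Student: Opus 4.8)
The plan is to build, for every $n$ and every $\alpha\ge 1$, an $n$-node graph $G$ together with a Nash equilibrium $S$ whose social costs are $\LDAUTheta{n^{2}}$, while some strategy profile has social costs $\LDAUOmicron{n\sqrt{\alpha}}$. Since $\PoA\ge 1$ always holds and since $n/\sqrt{\alpha}=\LDAUOmicron{1}$ as soon as $\alpha\ge\ell^{2}$, where $\ell:=\lfloor(n-1)/3\rfloor$, it suffices to handle the case $\sqrt{\alpha}<\ell$; then $c(S)/c(S^{\OPT})=\LDAUOmega{n^{2}}/\LDAUOmicron{n\sqrt{\alpha}}=\LDAUOmega{n/\sqrt{\alpha}}$, which together with $\PoA\ge 1$ gives $\PoA=\LDAUOmega{1+n/\sqrt{\alpha}}$.

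For the construction I take $G$ to be the \emph{three-legged spider}: a center node $u$ with three paths (``legs'') of length $\ell$ attached to it, the at most two leftover nodes being hung at $u$ (this does not affect the asymptotics). Let $v$ be the leaf of the first leg and set $S:=\{v\}$. I claim $S$ is a NE. With a single gateway no shortcut is usable, because $d(x,S)+d(S,y)=d(x,v)+d(v,y)\ge d(x,y)$; hence $\delta=d$, and the private cost of every node equals its graph eccentricity $\mathrm{ecc}_{G}(\cdot)$, plus $\alpha$ for $v$. The only candidate improving response is some non-gateway $x$ opening, after which $c_{x}=\alpha+\max_{y}\min\{d(x,y),d(v,y)\}$. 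The key point is that one can always pick a leaf $z\ne v$ not lying on the same leg as $x$ (if $x=u$, any leaf of leg~$2$ or~$3$): then $d(x,z)=d(x,u)+\ell=\mathrm{ecc}_{G}(x)$, since all legs have equal length, while $d(v,z)=2\ell\ge d(x,z)$, so $\min\{d(x,z),d(v,z)\}=\mathrm{ecc}_{G}(x)$ and $x$'s eccentricity is unchanged. Thus opening strictly costs $\alpha>0$, and $v$ may not close (last gateway), so $S$ is a NE. The third leg is exactly what makes this work: it supplies, for every would-be deviator, a vertex far from both the deviator and $v$, which a bare path would lack.

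Next I compute the costs. The NE social costs are $c(S)=\alpha+\mathrm{ecc}_{G}(u)+3\sum_{p=1}^{\ell}(p+\ell)=\alpha+\ell+3\bigl(\ell^{2}+\tfrac{\ell(\ell+1)}{2}\bigr)=\LDAUTheta{\ell^{2}}=\LDAUTheta{n^{2}}$ (here $\alpha<\ell^{2}$). For the optimum, on each leg I open gateways at distances $\lceil\sqrt{\alpha}\rceil,3\lceil\sqrt{\alpha}\rceil,5\lceil\sqrt{\alpha}\rceil,\dots$ from $u$ (feasible as $\lceil\sqrt{\alpha}\rceil\le\ell$), so that every node — including $u$, which lies within $\lceil\sqrt{\alpha}\rceil$ of the first gateway on leg~$1$ — is at distance at most $\lceil\sqrt{\alpha}\rceil\le 2\sqrt{\alpha}$ from some gateway. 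This profile has $\LDAUOmicron{\ell/\sqrt{\alpha}}=\LDAUOmicron{n/\sqrt{\alpha}}$ gateways (at least three), and any two gateways have communication distance $0$, so every node has communication eccentricity at most $2\lceil\sqrt{\alpha}\rceil$; hence its social costs are at most $\LDAUOmicron{n/\sqrt{\alpha}}\cdot\alpha+n\cdot 2\lceil\sqrt{\alpha}\rceil=\LDAUOmicron{n\sqrt{\alpha}}$, which gives $c(S^{\OPT})=\LDAUOmicron{n\sqrt{\alpha}}$ and the stated bound.

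The part I expect to be the crux is the equilibrium verification: one must observe that a lone gateway induces no usable shortcut, so a deviator's eccentricity stays pinned at its graph eccentricity by a leaf on the third leg that is simultaneously far from the deviator and from $v$; the two cost estimates afterwards are routine distance sums.
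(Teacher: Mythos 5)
Your proof is correct and follows essentially the same route as the paper: a three-legged spider with legs of length $\lfloor (n-1)/3\rfloor$, a single gateway at the leaf of one leg as the bad equilibrium, and evenly spaced gateways at spacing $\Theta(\sqrt{\alpha})$ for the $\LDAUOmicron{n\sqrt{\alpha}}$ upper bound on the optimum. Your equilibrium verification (a lone gateway yields $\delta=d$, and a leaf on a leg avoiding both the deviator and $v$ pins the deviator's eccentricity) is a welcome elaboration of the paper's one-line assertion that no node can improve its maximum distance by opening.
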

\begin{proof}
For $n\in \mathbb{N}$, $k\coloneqq \lfloor(n-1)/3\rfloor$, consider the following graph: denote one node $c$ as the center node, connect two disjoint paths of each $k$ many nodes to $c$, and connect one path consisting of $n - 2k - 1$ nodes to $c$.
When opening the leaf node of the last connected path, we have a NE strategy profile since no node can improve its maximum distance by opening.
The social costs of this equilibrium are at least
$3\sum_{i=1}^k (i + k)
    = 3k^2 + \frac{3}{2} (k+1)k
    = \LDAUOmega{n^2}
$.

Next, consider the socially optimal solution (cf.\ previous lemma).
For $\sqrt{\alpha} \geq n$, the optimal solution coincides with the NE.
Otherwise, we get the optimal solution by opening $c$ and opening a maximal set of nodes on each path such that between each two neighboring gateways their distance is $\lfloor\alpha\rfloor$.
The social costs are at most
$\alpha \frac{n}{\lfloor\sqrt{\alpha}\rfloor} + n\frac{\alpha}{\lfloor\sqrt{\alpha}\rfloor}
    % \leq 4n\sqrt{\alpha}
    = \LDAUOmicron{n\sqrt{\alpha}}
$,
which gives the lower bound.
\end{proof}

\section{Conclusion and Future Work}
We introduced a new network model to analyze effects of network interactions that are not captured by the traditional network creation games (NCGs).
The provided PoA results emphasize that for very small or big $\alpha$ (i.e., when tending to the number of nodes), equilibria are nearly optimal solutions despite of the selfish behavior of the nodes.

In comparison to NCGs, the existence of equilibria is much harder to show.
Here, the challenge is to combine the drastically reduced strategy space with the global influences of single strategy changes.
For the SUM-game, with $\alpha \in (n-1,n(n-1))$, and the MAX-game, with graphs of girth less than $4\alpha$, computing NEs seems to be an interesting problem.

Regarding the convergence, both games do not provide the finite improvement property and remarkably, the SUM-game is not even weakly acyclic.
For the MAX-game, especially for graphs with bigger girth, due to the symmetry of the maximum, the equilibria and convergence properties seem to be more stable than for the sum.

\begin{sloppy}
\printbibliography
\end{sloppy}

\end{document}